\newcommand{\ie}{{i.e.,}\xspace}
\newcommand{\true}{\mbox{\it true}}
\newcommand{\false}{\mbox{\it false}}
\newtheorem{theorem}{Theorem}[section]
\newtheorem{lemma}{Lemma}[section]
\newtheorem{claim}{Claim}[section]
\newtheorem{proposition}{Proposition}[section]
\newcommand{\qed}{\hfill $\Box$ \bigbreak}
\newenvironment{proof}{\noindent {\bf Proof.}}{\qed}
\newenvironment{proofclaim}{\noindent{\bf Proof of the claim.}}{\hfill$\star$}
\title{{\bf Almost-Optimal Deterministic Treasure Hunt\\ in Arbitrary Graphs}}
\author{
S\'{e}bastien Bouchard\thanks{Univ. Bordeaux, Bordeaux INP, CNRS, LaBRI, UMR5800, F-33400 Talence, France. E-mail: sebastien.bouchard@u-bordeaux.fr}
\and
Yoann Dieudonn\'{e}\thanks{
MIS Lab., Universit\'{e} de Picardie Jules Verne, France. E-mail: yoann.dieudonne@u-picardie.fr}
\and
Arnaud Labourel\thanks{
Aix Marseille Univ, Université de Toulon, CNRS, LIS, Marseille, France. Email: arnaud.labourel@lis-lab.fr}
\and
Andrzej Pelc\thanks{D\'{e}partement d'informatique, Universit\'{e} du Qu\'{e}bec en Outaouais,
Gatineau, Qu\'{e}bec J8X 3X7,
Canada. E-mail: pelc@uqo.ca.
Supported in part by NSERC discovery grant 2018-03899
and by the Research Chair in Distributed Computing of
the Universit\'{e} du Qu\'{e}bec en Outaouais.}
}
\date{ }
\begin{document}

\baselineskip  0.20in 
	\parskip     0.1in 
	\parindent   0.0in 
\maketitle
\begin{abstract}
A mobile agent navigating along edges of a simple connected graph, either finite or countably infinite, has to find an inert target (treasure) hidden in one of the nodes. This task is known as treasure hunt. The agent  has no {\em a priori} knowledge of the graph, of the location of the treasure or of the initial distance to it.  The cost of a treasure hunt algorithm is the worst-case number of edge traversals performed by the agent until finding the treasure. 
Awerbuch, Betke, Rivest and Singh \cite{ABRS} considered graph exploration and treasure hunt for finite graphs in a restricted model where the agent has a fuel tank that can be replenished only at the starting node $s$. The size of the tank is $B=2(1+\alpha)r$, for some positive real constant $\alpha$, where $r$, called the radius of the graph, is the maximum distance from $s$ to any other node. The tank of size $B$ allows the agent to make at most {$\lfloor B\rfloor$} edge traversals between two consecutive visits at node $s$. 

Let $e(d)$ be the number of edges whose at least one extremity is at distance less than $d$ from $s$. 
Awerbuch, Betke, Rivest and Singh \cite{ABRS}  conjectured that it is impossible to find a treasure hidden in a node at distance at most $d$ at cost nearly linear in $e(d)$. We first design a deterministic treasure hunt algorithm working in the model without any restrictions on the moves of the agent at cost $\mathcal{O}(e(d) \log d)$, and then show how to modify this algorithm to work in the model from \cite{ABRS} with the same complexity. Thus we
refute the above twenty-year-old conjecture. We observe that no treasure hunt algorithm can beat cost $\Theta(e(d))$ for all graphs  and thus 
 our algorithms are also almost optimal.

\vspace*{1cm}

{\bf keywords:} treasure hunt, graph, mobile agent

\vspace*{2cm}

\end{abstract}

\pagebreak

\section{Introduction}

\subsection{The background}

A mobile agent has to find an inert target (treasure) in some environment that can be a network modeled by a graph or a terrain in the plane. This task, known as {\em treasure hunt}, has important applications when the environment is dangerous for humans. When a miner is lost in a contaminated mine, it may have to be found by a robot, and the length of the robot's trajectory should be as short as possible, in order to minimize rescuing time.
In this example, a graph models the corridors of the mine with nodes representing crossings. Another application of treasure hunt in graphs is searching for a data item in a communication network modeled by a graph.
%

\subsection{The models and the problem}

{We consider a simple connected undirected locally finite graph $\mathcal{G}=(V_\mathcal{G},E_\mathcal{G})$, i.e., a graph with nodes of finite degrees. Such a graph can be either finite or countably infinite.} A mobile agent (robot) starts at a node $s$ of $\mathcal{G}$, called the {\em source node}, and moves along its edges. The maximum distance of any node from $s$ is denoted by $r$ and called the {\em radius} of the graph (the radius of countably infinite graphs is infinite).
We make the same assumption as in \cite{ABRS} that the agent has unbounded memory and can recognize already visited nodes and traversed edges. This is formalized as follows. Nodes of $\mathcal{G}$ have distinct labels that are positive integers. Each edge has ports at both of its extremities. Ports corresponding to edges incident to a node of degree $\delta$ are numbered $0,1,\dots, \delta -1$ in an arbitrary way. At the beginning, the agent situated at node $s$ sees its degree. The agent executes a deterministic algorithm: at each step, it selects a port number on the basis of currently available information, and traverses the corresponding edge. When the agent enters the adjacent node, it learns its label, its degree, and the incoming port number.

Each node of $V_\mathcal{G}$ will be identified with its label, and each edge of $E_\mathcal{G}$ will be identified as the quadruple $(v,w,p,q)$, where $v<w$ are labels of the edge extremities, $p$ is its port number at node $v$ and $q$ is its port number at node $w$.

The above simple model will be called {\em unrestricted}. However, some authors imposed additional restrictions, in the case when the graph is finite. The authors of \cite{ABRS} used a restriction of moves of the agent that we will call the {\em fuel-restricted model}. They  assumed that the agent has a fuel tank that can be replenished only at the starting node $s$ of the agent. The size of the tank is $B=2(1+\alpha)r$, for some positive real constant $\alpha$, where $r$ is the radius of the graph. The tank of size $B$ allows the agent to make at most {$\lfloor B\rfloor$} edge traversals between two consecutive visits at node $s$. The restriction used in \cite{DKK} was of a different kind. We will call it the {\em rope-restricted model}. {It was assumed in \cite{DKK} that the agent is {\em tethered}, i.e.,  attached to $s$ by a rope that it unwinds by a length 1 with every forward edge traversal and rewinds by a length of 1 with every backward edge traversal. The rope is infinitely extendible but has to satisfy the following constraint: the segment of the rope unwinded by the agent must never be longer than $L=(1+\alpha)r$, for some positive real constant $\alpha$.} Hence the agent is forced to match every forward edge traversal of an edge with a backward edge traversal, rewinding the rope, in a first-in last-out stack order.

The task of treasure hunt, in any of the above three models, is defined as follows. An adversary hides the treasure in some node of the underlying graph  $\mathcal{G}$. The agent  has no {\em a priori} knowledge of the graph, of the location of the treasure or of the initial distance to it, and has to find the treasure.  {The \emph{cost}} of a treasure hunt algorithm is the worst-case number of edge traversals performed by the agent until finding the treasure. 

In order to state our problem we need the notion of a {\em ball}.
Given a non-negative integer $k$, a graph $G$, and a node $u$, the ball $B_k(G,u)$ is defined as the subgraph $K=(V_K,E_K)$ of $G$, where $V_K$ is the set of all nodes at distance at most $k$ from $u$ in $G$, and $E_K$ is the set of all edges of $G$ whose at least one extremity is at distance smaller than $k$ from $u$ in $G$. (Thus  $B_k(G,u)$ is the subgraph of $G$ induced by nodes at distance at most $k$ without edges joining nodes at distance exactly $k$ from $u$ in $G$). The number of edges in ball $B_k(\mathcal{G},s)$, where $s$ is the source node, will be denoted by $e(k,\mathcal{G})$. Whenever the graph $\mathcal{G}$ is clear from the context, we will write $e(k)$ instead of $e(k,\mathcal{G})$.



The main problem considered in this paper is inspired by the following conjecture of Awerbuch, Betke, Rivest and Singh \cite{ABRS}, formulated for their fuel-restricted model:

\begin{quotation}

Is it possible (we conjecture not) to find a treasure in time nearly linear in the number of those vertices and edges whose distance to the source is less than or equal to that of the treasure?
\footnote{Time in this conjecture is what we call cost, i.e., the worst-case number of edge traversals until finding the treasure.}

\end{quotation}

\subsection{Our results} \label{results}

Our main result refutes the above twenty-year-old conjecture. Let $d$ be any integer such that $1<d\leq r$, where $r$ is the radius of the underlying graph $\cal G$.
We first design a deterministic treasure hunt algorithm working in the unrestricted model and always finding a treasure located at distance at most $d$ from the source node, at cost $\mathcal{O}(e(d) \log d)$. We then show how to modify this algorithm to work in the fuel-restricted and rope-restricted models with the same complexity. 
 Since $d \leq e(d)$, the cost of our algorithms differs from $e(d)$ only by a logarithmic factor, and hence it is nearly linear in $e(d)$, contrary to the conjecture. 
 Due to the ignorance of the agent concerning the graph in which it operates, no treasure hunt algorithm can beat cost $\Theta(e(d))$ for all graphs (cf. Proposition \ref{lb}) and thus 
 our algorithms are also almost optimal. The main difficulty is to design the algorithm for the unrestricted model. This algorithm is then suitably modified for each of the two restricted models.

Solving the problem of treasure hunt at a cost quasi-linear in $e(d)$ required to respect two fundamental principles, whose joint implementation seemed precarious in the light of the existing literature.

The first one is a \emph{prudence principle}. It consists in never getting ``for too long'' beyond the unknown distance $d$ in order to guarantee a cost that depends on $e(d)$. This can be ideally achieved by  emulating BFS. However, since in such an emulation the agent must physically move from one node to the next, it may be forced to traverse $\Omega(e(d)^2)$ edges before finding the treasure, in some graphs. In particular, this could be the case when $\mathcal{G}$ is an infinite line.

The second principle is what we could call an \emph{efficiency principle}. It consists in getting a cost that is asymptotically close to the number of edges of the subgraph that has been explored till finding the treasure, if the treasure is far away. This can be ideally achieved using the treasure hunt algorithm of \cite{DKK}, the cost of which is linear in the number of edges of the explored subgraph. However, using this algorithm, the agent may go for too long beyond the unknown distance $d$ and consequently the cost of treasure hunt could not be upper bounded by any function of $e(d)$.
The key challenge overcome by our work was combining these two principles within the same algorithm. It is precisely the combination of prudence with efficiency that finally made possible the design of an almost-optimal treasure hunt algorithm.

\subsection{Related work}

The task of treasure hunt, i.e., finding an inert target hidden in some environment, has been studied for over fifty years \cite{Bec1964,BN1970,Bel1963}.
The environment where the target is hidden may be a graph or a plane, and the search may be deterministic or randomized.
The book \cite{AG} surveys both treasure hunt and the related rendezvous problem, where the target and the searching agent are both mobile and
they cooperate to meet. This book is concerned mostly with randomized search strategies. In \cite{MP,TSZ} the authors studied relations between treasure hunt and rendezvous in graphs.  The authors of \cite{BCR} studied the task of treasure hunt on the line and in the grid, and initiated the study of the task
of searching for an unknown line in the plane. This research was continued, e.g., in \cite{JL,La2}. 

Several papers considered  treasure hunt in the plane, see surveys \cite{Gal2013,GK2010}.
In  \cite{La}, the author designs an optimal algorithm to sweep the plane in order to locate an unknown fixed target, where locating means getting the agent originating at point $O$ to a point $P$ such that the target is in the segment $OP$. In \cite{FHGTM}, the authors generalized the search problem in the plane to the case of several searchers.  Efficient treasure hunt in the plane, under complete ignorance of the searching agent, was studied in \cite{Pe}.
Treasure hunt on the line (called the cow-path problem \cite{KRT}) has been also generalized to the environment consisting of multiple rays originating at a single point \cite{AAD,DFG2006,LS2001,Sch2001}. 

In \cite{FKK+2008}, the authors considered treasure hunt in several classes of graphs including trees. Treasure hunt in trees was studied in \cite{DCD,DCS1995,KZ2011}. In \cite{DCD,DCS1995}, the authors considered complete $b$-ary trees, and in \cite{KZ2011}, treasure hunt was studied in symmetric trees, with possibly multiple treasures.

In \cite{KKKS,MP}, treasure hunt in graphs was considered under the advice paradigm, where a given number of bits of advice can be given to the agent, and the issue is to minimize this number of bits.  The impact of different types of knowledge on the efficiency of the treasure hunt problem restricted to symmetric trees was studied in \cite{KZ2011}.

The two papers closest to the present work are \cite{ABRS,DKK}. Both of them are mainly interested in exploration of finite unknown graphs but they both get interesting corollaries for the treasure hunt problem. \cite{ABRS} adopts the fuel-restricted model and \cite{DKK} adopts the rope-restricted model. In \cite{ABRS},
 the authors get a treasure hunt algorithm working at cost $O(E+V^{1+o(1)})$, where $E$ (resp. $V$) is the number of edges (resp. nodes) in a ball $B_{\Delta}(\mathcal{G},s)$, with $\Delta \leq d +o(d)$,
 if the treasure is at distance at most $d$ from the starting node of the agent.
 Since $e(\Delta)$ may be a lot larger than $e(d)$, this does not permit to bound the cost of the algorithm by any function of $e(d)$. This impossibility may be the reason for their conjecture that we refute in this paper. In \cite{DKK}, the authors design, for any constant $0<\alpha <1$, a treasure hunt algorithm whose cost is linear in $e((1+\alpha)d)$. Again, since $e((1+\alpha)d)$ may be much larger than $e(d)$, this does not permit to bound the cost of the algorithm by any function of $e(d)$.

\section{Preliminaries}

In this section we introduce some conventions, definitions and procedures that will be used to describe and analyze our algorithm.


Consider any graph $G=(V_G,E_G)\subseteq \mathcal{G}$. If $G$ is finite, its size \ie its number of edges is denoted by $|G|$. A graph is said to be {\em empty} if it contains no node. In the rest of this section, we assume that $G$ is not empty.


Let $u$ and $v$ be two (not necessarily distinct) nodes of $G$. We say that a sequence of $i$ integers $(x_1,x_2,\ldots,x_i)$ is a {\em path} (of length $i$) in $G$ from node $u$ to $v$ iff (1) $i=0$ and $u=v$, or (2) there exists an edge $e$ in $G$ between node $u$ and a node $w$ of $G$ such that the port number of edge $e$ at node $u$ is $x_1$ and $(x_2,\ldots,x_i)$ is a path from node $w$ to $v$ in $G$. The lexicocraphically smallest shortest path from node $u$ to $v$ in $G$, if any, is denoted by $P_G(u,v)$, and the length of this path is denoted by $|P_G(u,v)|$. The distance between $u$ and $v$ in $G$ is denoted by $d_G(u,v)$ and is equal to $|P_G(u,v)|$ if $P_G(u,v)$ exists, $\infty$ otherwise. If $G$ is finite and connected, {the eccentricity $\epsilon_G(u)$ of node $u$} is defined as $\max_{w\in{V_G}} d_G(u,w)$. The degree of $u$ in $G$ will be denoted by $deg_{G}(u)$, or simply by $deg(u)$ if $G=\mathcal{G}$. We say that node $u$ is \emph{incomplete} (resp. \emph{complete}) in $G$ if $deg_G(u)<deg(u)$ (resp. $deg_G(u)=deg(u)$). We also say that a port $p$ is free at node $u$ in $G$, if $p\leq deg(u)-1$ and there is no edge $(u,*,p,*)$ or $(*,u,*,p)$ in $E_G$.


We will often need to handle subgraphs of $\mathcal{G}$ through union and intersection operations. More precisely, given two subgraphs $G'=(V_{G'},E_{G'})$ and $G''=(V_{G''},E_{G''})$ of $\mathcal{G}$, the union of (resp. the intersection of) $G'$ and $G''$ is denoted by $G' \sqcup G''$ (resp. $G' \sqcap G''$) and is equal to $(V_{G'}\cup V_{G''},E_{G'}\cup E_{G''})$ (resp. $(V_{G'}\cap V_{G''},E_{G'}\cap E_{G''})$).

We define the boundary of a ball $B_{f}(\mathcal{G},s)$, where $s$ is the source node, as the set of nodes $u$ satisfying the following condition: $u$ is a node of $B_{f}(\mathcal{G},s)$ and for each neighbor $v$ of $u$ in $B_{f}(\mathcal{G},s)$, $d_{B_{f}(\mathcal{G},s)}(s,v)\leq d_{B_{f}(\mathcal{G},s)}(s,u)$.

To design our algorithm, we will also make use of three basic routines presented below.
The first routine is ${\tt MoveTo}(G,v)$. Assuming that the agent currently occupies a node $w$ of $G$ and $P_G(w,v)$ exists, this routine moves the agent from node $w$ to node $v$ by following path $P_G(w,v)$. The second routine is ${\tt IncompleteNodes}(v,G,l)$ where $l$ is a positive integer. This routine returns the set of all nodes $w$ of $G$ such that $d_G(v,w)\leq l$ and $w$ is incomplete in $G$. The third routine is ${\tt Nodes}(\mathcal{S})$, where $\mathcal{S}$ is a finite set of finite subgraphs of $\mathcal{G}$. This routine returns the union of all nodes in all subgraphs from $\mathcal{S}$.

Given an execution $\mathcal{E}$ of a series of instructions, the cost of $\mathcal{E}$ is the number of edge traversals performed by the agent during $\mathcal{E}$.

We will use the following convention.
The agent will sometimes need to use Depth First Search traversal of graphs (not performed physically, but performed as a computation in the memory of the agent). Such a traversal depends on the order in which edges incident to a given node are traversed for the first time. We fix this order as the increasing order of port numbers at the given node. In this way the traversal is unambiguous, and we call it DFS.


We end this section with the following straightforward observation implying that no treasure hunt algorithm can beat cost $\Theta(e(d))$ for all graphs and hence our treasure hunt algorithm is almost optimal
\footnote{Since the treasure has to be hidden in a node, the agent does not necessarily have to traverse all edges of the ball $B_d(\mathcal{G},s)$.}.
This observation holds in all three considered models: unrestricted, fuel-restricted and rope-restricted. In fact, the proposition shows that graphs $\mathcal{G}$ for which the cost of treasure hunt is at least $\Theta(e(d))$ can be found for any density of edges in the ball $B_d(\mathcal{G},s)$.

\begin{proposition}\label{lb}
For every treasure hunt algorithm $\cal A$, for every integer $d>1$, for every integer $m\geq d$ and for every integer {$m\leq x \leq m^2$}, there exists a graph $\mathcal{G}$ {of radius $d$} such that $B_d(\mathcal{G},s)$ has $\Theta(m)$ nodes and $\Theta(x)$ edges and the cost incurred by $\cal A$ to find the treasure located at some node at distance at most $d$ from the source node $s$ in graph $\mathcal{G}$ is at least $e(d,\mathcal{G})-1$.
\end{proposition}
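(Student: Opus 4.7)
The plan is to exhibit, given $d$, $m$, $x$, and the deterministic algorithm $\mathcal{A}$, a graph $\mathcal{G}$ of radius $d$ in which $\mathcal{A}$ is forced to pay cost at least $e(d,\mathcal{G})-1$. I would build $\mathcal{G}$ by attaching to $s$ a collection of pairwise isomorphic ``branches'', tuning the number $k$ of branches and the internal shape of a single branch so as to hit the desired $\Theta(m)$ nodes and $\Theta(x)$ edges inside $B_d$. In the sparse regime where $x$ is close to $m$, each branch is a simple path of length $d$; in denser regimes each branch is a shorter path ending in a clique anchored strictly inside $B_d$, chosen so that every clique edge contributes to $e(d)$. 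A direct parameter count then shows that the radius is exactly $d$ and that the target counts can be realized for all $(m,x)$ in the prescribed range.

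The lower bound comes from a classical online adversary argument that exploits the indistinguishability of the branches. Since $\mathcal{A}$ is deterministic and the branches share the same port-labeled structure, I would commit to each branch's node labels and port numbering only as $\mathcal{A}$ actually probes it, always keeping at least one still-symmetric candidate branch as a ``twin'' of the one currently under exploration. After $\mathcal{A}$ has fully explored $k-1$ branches, the treasure is placed inside the last one at a node $\mathcal{A}$ can reach only after essentially completing its exploration of that branch. Hence $\mathcal{A}$'s walk is forced to traverse every edge of $B_d$ in the first $k-1$ branches, plus enough of the last branch, and a straightforward count gives at least $e(d,\mathcal{G})-1$ traversals.

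The step I expect to be the main obstacle is proving the per-branch bound: that exploring a branch ``empty of treasure'' actually costs $\mathcal{A}$ a number of edges proportional to the edges that branch contributes to $e(d)$, not merely to its node count. For tree-shaped branches the bound is automatic, since DFS on a tree with $e$ edges pays $2e-1$ traversals. For branches containing a clique the argument is more delicate: I plan to use the adversary's freedom in port assignment so that $\mathcal{A}$ must probe essentially every port at each clique node before a genuinely new successor is revealed, making each such probe cost a separate edge. Summing the resulting per-branch cost over the $k-1$ explored branches and adding the few traversals inside the last branch is the step that finally crystallises into the $e(d,\mathcal{G})-1$ bound.
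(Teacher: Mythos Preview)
Your plan has a genuine gap in the dense regime. You place the treasure at a \emph{node} of the last branch, so the adversary can at best force $\mathcal{A}$ to visit every node; but visiting all nodes can cost far less than the number of edges. Concretely, take the algorithm that, upon first entering a clique along the attaching edge at some vertex $u$, simply probes ports $0,1,\dots,\deg(u)-1$ from $u$, backtracking to $u$ after each probe. In a $c$-clique every port at $u$ leads to a \emph{distinct} neighbour, so after $2(c-1)$ traversals this algorithm has already visited every clique vertex, irrespective of how you assign ports. Summed over $k$ branches this is $\Theta(m)$ traversals, not $\Theta(x)$, so when $x\gg m$ you cannot reach cost $e(d,\mathcal{G})-1$. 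Your hoped-for step, ``use the adversary's freedom in port assignment so that $\mathcal{A}$ must probe essentially every port at each clique node before a genuinely new successor is revealed,'' is simply false for cliques: simplicity of the graph pins each port to a different endpoint.

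What the paper does instead is to decouple the density from the node count by a single edge-subdivision trick. It builds one base graph $\mathcal{H}$ (a length-$d$ tail $A$ to fix the radius, plus a star $B$ of $m$ leaves with $\Theta(x)$ extra edges among them), runs $\mathcal{A}$ on $\mathcal{H}$ until all nodes are visited, and then looks at the set $\{e_1,\dots,e_y\}$ of still-untraversed edges. For each $e_i$ it defines $\mathcal{H}_i$ by inserting a midpoint on $e_i$ and places the treasure there. These graphs are indistinguishable from $\mathcal{H}$ until $e_i$ is actually traversed, so $\mathcal{A}$ is forced to probe the untraversed edges one by one; choosing the last-probed one as the real $\mathcal{G}$ gives cost at least $e(d,\mathcal{G})-1$. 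The key idea you are missing is exactly this: to make every \emph{edge} matter, hide the treasure not at an existing node but at a potential subdivision vertex, so that an untraversed edge is always a live candidate.
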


\begin{proof}
It is enough to prove the proposition for the unrestricted model. Our proof works even if the radius $d$ is known to the agent. Fix a treasure hunt algorithm $\cal A$, an integer $d>1$, an integer $m\geq d$ and an integer {$m\leq x \leq m^2$}.

We start with the construction of the following graph $\mathcal{H}$. The set of nodes of  $\mathcal{H}$ is the union of disjoint sets $\{s\}\cup A\cup B$, where $A=\{a_1,a_2,\dots,a_d\}$ and $B=\{b_1,b_2,\dots,b_m\}$. The set of edges of $\mathcal{H}$ is the union of disjoint sets $X \cup Y \cup Z$, where $X=\{\{s,a_1\},\{a_1,a_2\},\{a_2,a_3\},\dots, \{a_{d-1},a_d\}\}$, $Y=\{\{s,b_1\}, \{s,b_2\},\dots,\{s,b_m\}\}$, and $Z$ is any set of edges of size $\min(x,m(m-1)/2)$ between nodes of the set $B$. The graph $\mathcal{H}$ is connected, has $1+d+m \in \Theta(m)$ nodes and $d+m+\min(x,m(m-1)/2) \in \Theta(x)$ edges. Moreover, $B_d(\mathcal{H},s)=\mathcal{H}$ and $\mathcal{H}$ has radius $d$.

Let $x$  be the first step when all nodes of $\mathcal{H}$ are visited, assuming that the treasure has not  been found before. Let $y$ be the number of untraversed edges of $\mathcal{H}$ at step $x$.  There are two cases. If $y<2$ then the adversary puts the treasure at the last-discovered node of $\mathcal{H}$ and the proposition is satisfied by graph $\mathcal{H}$ itself. Hence we may assume that $y\geq 2$.
Let $e_1$, ..., $e_y$ be the edges of $\mathcal{H}$ untraversed by step $x$, and let $\mathcal{H}_i$, for $i=1,\dots, y$, be the graph $\mathcal{H}$ with a midpoint added on edge $e_i$. Note that none of the edges $e_i$ can be an element of the set $X$ because this would contradict the fact that node $a_d$ has been visited by step $x$. Consequently, the radius of each graph $\mathcal{H}_i$ is $d$. 
The set of nodes of $B_d(\mathcal{H}_i,s)$ is the set of nodes of $B_d(\mathcal{H},s)$ augmented by the midpoint added on edge $e_i$, and we have $e(d,\mathcal{H}_i) = e(d,\mathcal{H})+1$.
Thus, for all $i$, $B_d(\mathcal{H}_i,s)$ has $\Theta(m)$ nodes and $\Theta(x)$ edges.  In this case the adversary will put the treasure at one of the added midpoints and claim that the actual graph is the corresponding graph $\mathcal{H}_i$.
Below we show how to choose the index $i$. Since the treasure is not placed at any node of $\mathcal{H}$, the execution of $\cal A$ until step $x$ is the same in graphs $\mathcal{H}$, $\mathcal{H}_1$,...,$\mathcal{H}_y$. In order to find the treasure placed at the added node of some of the graphs $\mathcal{H}_i$, algorithm $\cal A$ must take the port corresponding to edge $e_i$ at one of its extremities. If it does not find a midpoint inserted in $e_i$, i.e., if the actual graph is not $\mathcal{H}_i$, algorithm $\cal A$ must  take the port corresponding to some other edge $e_j$ at one of its extremities, in order to find the treasure situated at the added midpoint of $e_j$, if the actual graph were $\mathcal{H}_j$, and so on. Let $e_f$ be the last of the edges $e_i$ examined in this way. Suppose that the actual graph $\mathcal{G}$ is $\mathcal{H}_f$ and that the adversary places the treasure at the midpoint of $e_f$. Hence the cost of finding the treasure in the graph $\mathcal{G}=\mathcal{H}_f$ is {at least} $e(d,\mathcal{H})-y+y=e(d,\mathcal{H})\geq e(d,\mathcal{G})-1$. This concludes the proof.
 \end{proof}

\section{Intuition} 
\label{sec:int}
The purpose of this section is to sketch an intuitive overview of our algorithm that allows to find the treasure at an almost-optimal cost in the unrestricted model. To this end and to simplify the discussion, we will assume that the underlying graph $\mathcal{G}$ is countably infinite with nodes of finite degrees. We will rely on the notion of \emph{largest explored ball}. By ``largest explored ball'', at a given phase of treasure hunt, we mean the ball $B_f(\mathcal{G},s)$ where $f$ is the largest integer such that each edge of $B_f(\mathcal{G},s)$ has been traversed at least once. This largest integer $f$ is the radius of the largest explored ball.



At a high level, our algorithm works in phases $i=1,2,3,\ldots$ and immediately stops as soon as the treasure is found. At the beginning of phase $i$, the agent is located at node $s$ and {the radius of the largest explored ball is equal to $f_i$}. The goal for the agent is to terminate the phase at node $s$ while satisfying at least one of the following three conditions unless, of course, the treasure has been found before.

\begin{itemize}
\item {\it Condition~1.} The agent has entirely explored ball $B_{f_i+1}(\mathcal{G},s)$, $e(f_i+1)\geq 2e(f_i)$ and the cost of the phase is $\mathcal{O}(e(f_i+1))$. 
\item {\it Condition~2.} The agent has entirely explored ball $B_{2f_i}(\mathcal{G},s)$, $f_i\geq 1$ and the cost of the phase is $\mathcal{O}(e(f_i))$.
\item {\it Condition~3.} The agent has entirely explored ball $B_{f_i+k}(\mathcal{G},s)$ for some positive integer $k$, $e(f_i+k+1)\geq 2e(f_i)$, $f_i\geq 2$, and the cost of the phase is $\mathcal{O}(e(f_i)\log f_i)$.
\end{itemize}

Actually, the conditions we really seek to meet in our algorithm are a little more intricate than those presented above, because we needed stronger technical requirements to show Theorem~\ref{theo:final}, which refutes the conjecture of Awerbuch, Betke, Rivest and Singh \cite{ABRS}. However, this would add an unnecessary level of complexity to understand the intuition, hence we omit these technical details here.

Before seeing how we implement our strategy, let us briefly examine why it permits us to get a cost quasi-linear in $e(d)$. Since $f_1=0$ and the radius of the largest explored ball increases by at least one during each phase in which the treasure is not found, the agent necessarily finds the treasure by the end of some phase $\lambda\leq d$, and $f_i<f_\lambda<d$ for every $1\leq i < \lambda$. During each phase satisfying Condition~1, the size of the largest explored ball at least doubles, which means that the total cost of these phases is upper bounded by twice the worst-case cost of the last phase satisfying Condition~1 i.e., $\mathcal{O}(e(f_\lambda+1))$. Concerning the phases fulfilling Condition~2, their number is at most {$\mathcal{O}(\log (f_{\lambda}+1))$} and the cost of each of them cannot be more than $\mathcal{O}(e(f_\lambda))$, which implies that their total cost is {$\mathcal{O}(e(f_\lambda)\log (f_{\lambda}+1))$}. It remains to consider the case of the phases satisfying Condition~3. Given such a phase $i$, we have the guarantee that the size of the largest explored ball at least doubles between the beginning of phase $i$ and the end of phase $i+1$, provided phase $i+1$ exists and is not prematurely interrupted by the discovery of the treasure. Indeed, at the end of phase $i$, the agent has at least entirely explored ball $B_{f_i+k}(\mathcal{G},s)$ for some positive integer $k$ and $e(f_i+k+1)\geq 2e(f_i)$, while at the end of the (not prematurely interrupted) phase $i+1$ the agent has at least entirely explored ball $B_{f_{i+1}+1}(\mathcal{G},s)$ with $f_{i+1}\geq f_i+k$. Using this, it can be shown that the total cost of the phases satisfying Condition~3 is at most four times the worst-case cost of the last phase satisfying this condition i.e., {$\mathcal{O}(e(f_\lambda)\log (f_{\lambda}+1))$}. Given that the last phase $\lambda$ can be viewed as a truncated phase that should have normally satisfied one of the three conditions, our sketch of analysis leads to the conclusion that the cost incurred by the agent till the discovery of the treasure is in {$\mathcal{O}(e(f_\lambda+1)\log (f_{\lambda}+1))$}, which is $\mathcal{O}(e(d)\log d)$ and is in line with our expectations.

Having justified the pertinence of such a strategy, we can turn our attention to its implementation. To do so, we need to introduce a technical building block, called {\tt GlobalExpansion}$(l,m)$ to which we will go back at the end of this section to give additional details. Always executed from the source node $s$, it is a function that returns a boolean and whose two input parameters are positive integers except $m$ that may be sometimes equal to the special symbol $\perp$. Assuming that $B_{f}(\mathcal{G},s)$ is the largest explored ball, the execution of {\tt GlobalExpansion}$(l,\perp)$ permits the agent to traverse all the edges of $B_{f+l}(\mathcal{G},s)$ that are outside of $B_{f}(\mathcal{G},s)$ before coming back to node $s$. Under the same assumption, the execution of {\tt GlobalExpansion}$(l,m)$, when $m$ is a positive integer, consists for the agent in acting as if $m$ was $\perp$ but with the following extra requirement: as soon as more than $m$ distinct edges outside of $B_{f}(\mathcal{G},s)$ have been traversed during the execution of the function, the agent backtracks to node $s$ and aborts this execution. If $m$ is $\perp$ or at least large enough to avoid an aborted execution, the agent ends up exploring $B_{f+l}(\mathcal{G},s)$ and the function returns true. Otherwise, the function returns false. It should be stressed that all of this is made while guaranteeing two properties. The first one is that the agent is always in $B_{f+2l-1}(\mathcal{G},s)$ during the execution of {\tt GlobalExpansion}$(l,m)$. The second is that the cost of the execution of {\tt GlobalExpansion}$(l,m)$ is $\mathcal{O}(e(f+2l-1))$ (resp. $\mathcal{O}(\min\{e(f)+m,e(f+2l-1)\})$) when $m=\perp$ (resp. $m\ne\perp$). Both these properties will turn out to be crucial to ensure a proper design of the phases. Finally, even if by chance the agent could explore a larger ball, we will assume for the ease of our intuitive explanations that $B_{f+l}(\mathcal{G},s)$ (resp. $B_{f}(\mathcal{G},s)$) is the largest ball explored by the agent at the end of {\tt GlobalExpansion}$(l,m)$ in the case where the returned value is true (resp. false).


Let us consider a phase $i$ of our algorithm and, in order not to burden the text with a lot of ``unless the treasure is found'', let us assume that the treasure will not be found by the end of it. Phase $i$ is made of at most three successive attempts, each of them aiming at fulfilling  at least one of the three conditions described earlier, with the help of our building block. In the first attempt, the agent executes {\tt GlobalExpansion}$(1,\perp)$ from node $s$, the cost of which is $\mathcal{O}(e(f_i+1))$. At the end of this execution, the agent is at node $s$ and $B_{f_i+1}(\mathcal{G},s)$ has been entirely explored by the agent. If $e(f_i+1)\geq 2e(f_i)$ or $f_i\leq 1$, the first attempt is a success as Condition~1 or Condition~2 is verified, and the agent directly switches to phase $i+1$. Otherwise, the attempt is a failure, but we can nonetheless observe that the cost incurred because of the attempt is just $\mathcal{O}(e(f_i))$ because $e(f_i+1)<2e(f_i)$.

If the first attempt has failed, the agent starts the second attempt of phase $i$ that consists of an execution of function {\tt GlobalExpansion}$(f_i-1,e(f_i))$. The hope here is to expand by a distance of $f_i-1$ the radius of the largest explored ball, which is $B_{f_i+1}(\mathcal{G},s)$. According to the properties of {\tt GlobalExpansion} and the fact that $e(f_i+1)<2e(f_i)$, the cost of this execution, and thus of the second attempt, is $\mathcal{O}(e(f_i))$. If {\tt GlobalExpansion}$(f_i-1,e(f_i))$ returns true, then at the end of the second attempt, the radius of the largest explored ball is $2f_i$. Hence, the cost of the first two attempts being equal to $\mathcal{O}(e(f_i))$ and $f_i$ being at least $2$, Condition~2 is satisfied and the agent starts phase $i+1$ without making the third attempt.


On the other hand, if {\tt GlobalExpansion}$(f_i-1,e(f_i))$ returns false, it is a different story. Indeed, the largest explored ball is still only $B_{f_i+1}(\mathcal{G},s)$ and we cannot ensure the fulfillment of Condition~1 or Condition~2. This is exactly where Condition~3 comes into the picture. In order to remedy the failures of the two previous attempts, the agent will start a third and last attempt which consists of a dichotomic process that is described in Algorithm~\ref{alg:dich}. At the end of this process, Condition~3 is guaranteed to be satisfied.

\begin{algorithm}[H]
\caption{Third attempt\label{alg:dich}}
\SetKwFunction{GE}{GlobalExpansion}
$floor := f_i+1$; $ceil := 3f_i-2$; $l := \lfloor \frac{ceil-floor}{2} \rfloor$\;\label{line1}
\While{$l\geq 1$ \AND $|B_{floor}(\mathcal{G},s)|< 2e(f_i)$}{
$success :=$ \GE{$l, e(f_i)$}\;
      \eIf{$success = \true$ }{
      	$floor := floor + l$; $l := \lfloor \frac{ceil-floor}{2} \rfloor$\;
      }
      {
        $ceil := floor + 2l -1$; $l := \lfloor\frac{l}{2} \rfloor$\;
      }
  }
\end{algorithm}

In order to better understand why we can get such a guarantee, let us take a look at the properties that are satisfied during the third attempt and at its end.

Since the execution of {\tt GlobalExpansion}$(f_i-1,e(f_i))$ returned false, the agent has explored at least $e(f_i)$ distinct edges outside of ball $B_{f_i+1}(\mathcal{G},s)$ during the second attempt. Moreover, during this execution, the agent was always in $B_{3f_i-2}(\mathcal{G},s)$ according to the properties of {\tt GlobalExpansion}. As a result, in view of line~\ref{line1} of Algorithm~\ref{alg:dich}, we necessarily have the following feature before the execution of the while loop of Algorithm~\ref{alg:dich}: $B_{floor}(\mathcal{G},s)$ is the largest explored ball and $e(ceil)\geq 2e(f_i)$. Actually, by carefully examining the pseudocode of the while loop and using again the properties of {\tt GlobalExpansion}, it can be inductively proven that this feature is a loop invariant. Alone, this loop invariant is not enough to bring the sought guarantee, but as highlighted below, it is of precious help to do the job.

The number of iterations of the while loop can be shown to be $\mathcal{O}(\log f_i)$. Furthermore, at the beginning of each iteration, $B_{floor}(\mathcal{G},s)$ has size smaller than $2e(f_i)$ in view of the condition of the while loop, and is the largest explored ball in view of the loop invariant. Hence, according to the cost property of {\tt GlobalExpansion}, each execution of {\tt GlobalExpansion}$(l,e(f_i))$ costs at most $\mathcal{O}(e(f_i))$ like the previous two attempts, which gives a total cost of $\mathcal{O}(e(f_i)\log f_i)$ of the whole phase. This corresponds exactly to the target value of Condition~3. Along with this, at the end of the while loop, the size of $B_{floor}(\mathcal{G},s)$ is at least $2e(f_i)$, or $l<1$. In the first case, we immediately have $e(floor+1)\geq 2e(f_i)$, while in the second case it can be shown that $ceil\leq floor+1$. This, combined with the fact that $e(ceil)$ is always at least $2e(f_i)$ (by the loop invariant) and the fact that $floor$ is always at least $f_i+1$, allows us to show the last missing piece of the puzzle, which is precisely this: when Algorithm~\ref{alg:dich} terminates, ball $B_{f_i+k}(\mathcal{G},s)$ is entirely explored and $e(f_i+k+1)\geq 2e(f_i)$ for some integer $k\geq 1$.

To conclude with the intuitive explanations, let us give, as promised, some more insight concerning the building block {\tt GlobalExpansion}$(l,m)$. At first glance, one might think that {\tt GlobalExpansion} could be directly derived from the exploration algorithm ${\tt CFX}(v,r,\alpha)$ of \cite{DKK}, which permits to explore a ball $B_r(\mathcal{G},v)$ at a cost of {$\mathcal{O}\left(\frac{|B_{(1+\alpha)r}(\mathcal{G},v)|}{\alpha}\right)$} for any given real $\alpha>0$ (this corresponds to a cost of {$\mathcal{O}\left(\frac{e((1+\alpha)r)}{\alpha}\right)$} when $v=s$) {provided $\alpha r\geq 1$}. Indeed, the task of {\tt GlobalExpansion}$(l,m)$ that consists in expanding the radius $f$ of the largest explored ball by a distance $l$ in the case where $m$ is appropriately set, can be done with ${\tt CFX}(s,f+l,\alpha)$. However, in this case we want the cost of this expansion to be $\mathcal{O}(e(f+2l-1))$, which is an important property of our strategy. This cannot be guaranteed using ${\tt CFX}(s,f+l,\alpha)$ because, in order to get a cost depending on $e(f+2l-1)$, we would have to set $\alpha$ to a value lower than $\frac{l-1}{f+l}$, which {cannot lead to a cost that is linear in $e(f+2l-1)$}, as $\frac{l-1}{f+l}$ can be arbitrarily small. True, during the design we could have been ``less demanding'' about some of the properties of {\tt GlobalExpansion}$(l,m)$, but not significantly enough to permit the use of ${\tt CFX}(s,f+l,\alpha)$ without spoiling the validity or the cost complexity of our strategy. Another solution that may come to mind would be to apply ${\tt CFX}(v,l,\alpha)$ from each node $v$ located on the boundary of the largest explored ball $B_f(\mathcal{G},s)$. Visiting each node of the boundary can be done in $\mathcal{O}(e(f))$. Hence, this solution looks attractive because by setting $\alpha$ to $\frac{1}{2}$ or less (which overcomes the above problem of the arbitrarily small value) and provided the zones explored by the different executions of {\tt CFX} do not overlap, we would get a cost that is linear in $e(f+2l-1)$. The bad news is that there may be overlaps. Of course, some overlaps can be easily avoided, especially those appearing within $B_f(\mathcal{G},v)$, but some others cannot without running the risk of missing some nodes of $B_{f+l}(\mathcal{G},s)$ that are outside of $B_{f}(\mathcal{G},s)$. These ``necessary overlaps'' may be pernicious and may occur in a way that prevents us from guaranteeing a cost of $\mathcal{O}(e(f+2l-1))$.

So, what did we do? Although it was not possible to use ${\tt CFX}$ as a black box, we managed to tailor {\tt Global\-Expansion} by adapting to our needs an elegant algorithmic technique used in ${\tt CFX}$. Through a set of judiciously pruned trees spanning some already explored area, it allowed us to satisfy the desired cost property of {\tt Global\-Expansion} by controlling and amortizing efficiently the number of times the same edges are traversed. The technique in question is detailed in the next section that presents the pseudocode of our treasure hunt algorithm.

\section{Algorithm}\label{sec:alg}

Solving the treasure hunt problem  in the unrestricted model can be done by executing Algorithm ${\tt TreasureHunt}(x)$ described below in Algorithm~\ref{alg:TH} and by interrupting it as soon as the treasure is found. The input parameter $x$ is a positive real constant. It is a technical ingredient that will have an impact on the maximal distance at which the agent can be from node $s$. In our present context, parameter $x$ does not really matter and it can be fixed as \emph{any} positive real constant. In fact, it will show its full significance in Section~\ref{sec:rest} that is dedicated to the same problem in restricted models: there, we will reuse ${\tt TreasureHunt}(x)$ in a context where $x$ will have to be carefully chosen. The variable $\mathcal{M}$ in line~\ref{THadd} of Algorithm~\ref{alg:TH} is a global variable that will always correspond to some explored subgraph of $\mathcal{G}$. For this reason, it will recurrently appear in most of the pseudocodes of the functions described thereafter.

\begin{algorithm}[H]
\caption{${\tt TreasureHunt}(x)$\label{alg:TH}}
\SetKwFunction{GE}{GlobalExpansion}
\SetKwFunction{SE}{Search}

\SetKwBlock{Repeat}{repeat}{}
$v :=$ the current node\; \label{TH1}
$\mathcal{M} := (\{v\},\emptyset)$; \tcc{$\mathcal{M}$ is a global variable}\label{THadd} 
\Repeat{
       \SE{$x$}\;
  }
\end{algorithm}

As the reader can see, the execution of Algorithm ${\tt TreasureHunt}(x)$ essentially consists of a series of executions of procedure {\tt Search}$(x)$, whose pseudocode is described in Algorithm~\ref{alg:SE}: these executions correspond to what we called ``phases'' in our intuitive explanations of Section~\ref{sec:int}. Procedure {\tt Search}$(x)$ should be seen as the organizer of our solution. At the beginning of each call to {\tt Search}$(x)$, $\mathcal{M}$ is some explored ball $B_f(\mathcal{G},s)$ and the goal of the call is to make grow this ball while satisfying some conditions. These conditions, whose simplified version we gave at the beginning of Section~\ref{sec:int}, are formally described in Lemma~\ref{lem:search1}.

\begin{algorithm}[H]
\caption{${\tt Search}(x)$\label{alg:SE}}
\SetKwFunction{LE}{LocalExpansion}
\SetKwFunction{MT}{MoveTo}
\SetKwFunction{CDFS}{CDFS}
\SetKwFunction{BOUND}{Boundary}
\SetKwFunction{MAP}{Map}
\SetKwRepeat{Do}{do}{while}
$v :=$ the current node; $m := |\mathcal{M}|$\;\label{SEf}
$floor := \epsilon_\mathcal{M}(v)$; $ceil := \lfloor (1+x)\cdot floor \rfloor$\;\label{assign0}
$success :=$ \GE{$1, \perp$}\;\label{SEinit}
$floor := floor+1$; $i := 0$; $l := \lfloor \frac{ceil-floor}{2} \rfloor$\; \label{assign1}
\While{$l\geq 1$ \AND $|\mathcal{M}|< 2m$ \AND $(i\ne 1$ \OR $success= \false)$}{\label{SEcondwhile}
$success :=$ \GE{$l, m$}\;
      \eIf{$success = \true$ }{\label{condassign2}
      	$floor := floor + l$; $l := \lfloor \frac{ceil-floor}{2} \rfloor$\;\label{assign2}
      }
      {
        $ceil := floor + 2l -1$; $l := \lfloor\frac{l}{2} \rfloor$\;\label{assign3}
      }\label{fincond}
$\mathcal{M} := B_{floor}(\mathcal{M},v)$\;\label{resizeball}
$i := i+1$\;
  }
\end{algorithm}

Although there are some technical differences, we can discern, throughout the lines of Algorithm~\ref{alg:SE}, the three attempts outlined in Section~\ref{sec:int} that rely on function {\tt GlobalExpansion}. Roughly speaking, line~\ref{SEinit} of Algorithm~\ref{alg:SE} relates to the first attempt, the first iteration of the while loop of Algorithm~\ref{alg:SE} relates to the second attempt, and the other iterations relate to the third attempt. 

The pseudocode of function ${\tt GlobalExpansion}(l, m)$ is given by Algorithm~\ref{alg:GE}. It has primarily the same specifications as those given in Section~\ref{sec:int} except that we did not implement the case where $m=\perp$ and $l\geq2$ as it was not necessary for our purpose. Hence, the function precisely handles the case where $l=1$ and $m=\perp$, and the case where $l\geq1$ and $m\ne\perp$. The general scheme of the function is as follows. At the beginning, the agent knows a ball $B_{f}(\mathcal{G},s)$ that is stored in variable $\mathcal{M}$ and the objective is to expand the radius of this ball by a distance $l$, without exploring more than $m$ edges outside of $B_{f}(\mathcal{G},s)$, if $m\ne\perp$. To do this, the agent visits the nodes $L[1],L[2],\ldots$ (stored in the array $L$) of the boundary of $B_f(\mathcal{G},s)$ and executes from these nodes function {\tt CDFS} (described in Algorithm~\ref{alg:CDFS}) or function {\tt LocalExpansion} (described in Algorithm~\ref{alg:LE}) depending on the initial values of $l$ and $m$. Each of these executions, which starts and ends at the same node, locally contributes to the global expansion of the ball. In the case where $m\ne\perp$, variable $b$ of Algorithm~\ref{alg:GE} is updated with the return value of the two aforementioned functions, and corresponds at each stage to the remaining number of new edges the agent is authorized to traverse outside of $B_{f}(\mathcal{G},s)$. If $b$ becomes negative before the end of the while loop of Algorithm~\ref{alg:GE}, the objective of expansion is simply not reached. Note that, in order to avoid that the moves from one node of the boundary of $B_f(\mathcal{G},s)$ to the next get too costly, they are made according to a precise order that results from the definition of $L$ given in line~\ref{DFS} of Algorithm~\ref{alg:GE}.

\begin{algorithm}[H]
\caption{${\tt GlobalExpansion}(l, m)$\label{alg:GE}}
\SetKwFunction{LE}{LocalExpansion}
\SetKwFunction{MT}{MoveTo}
\SetKwFunction{CDFS}{CDFS}
\SetKwFunction{BOUND}{Boundary}
\SetKwFunction{MAP}{Map}
$v := $ the current node\;\label{GEl1}
$L :=$ the array containing all the nodes of the boundary of $\mathcal{M}$ sorted in the order of the first visit through the DFS traversal of $\mathcal{M}$ from node $v$\;\label{DFS}
$T :=$ the tree produced by the DFS traversal of $\mathcal{M}$ from node $v$\;\label{DFStree}
$i := 1$; $b := m$; $\mathcal{T} := \emptyset$; \tcc{$\mathcal{T}$ is a global variable}\label{GE:initialize}
\While{$i\leq |L|$ \AND $(b\geq0$ \OR $b=\perp)$}{\label{GE:while}
 \MT{$T, L[i]$}\;\label{GElMT}\label{firsttype}
 \eIf{$l=1$}{\label{secondtype0}
\eIf{$b=\perp$}{
\tcc{We run \CDFS{$1,deg(L[i])$} without using its return value.}
$(*,*) := $\CDFS{$1,deg(L[i])$}\;\label{secondtype1}
}
{
\tcc{We run \CDFS{$1,b$} without using the second term of its return value.}
$(b,*) := $\CDFS{$1,b$}\;\label{secondtype1bis}
}
}
{
$b :=$ \LE{$l, b$}\; \label{secondtype2}
}
$i := i+1$\;
  }
\MT{$T,v$}\;\label{GEback}\label{thirdtype}
{\Return the logical value of ``$b\geq0$ or $b=\perp$''\;}\label{GE:lastline}
\end{algorithm}

As one can see in lines~\ref{secondtype1} and~\ref{secondtype1bis} of Algorithm~\ref{alg:GE}, the implementation of the case $l=1$ in Algorithm~\ref{alg:GE} directly relies on function {\tt CDFS}. We will see below that this function is also involved in the trickier case where $l\geq2$ and $m\ne\perp$ through the calls to function {\tt LocalExpansion}. Function {\tt CDFS}$(l,b)$ permits the agent to perform a depth-first search in the zone that does not belong to $\mathcal{M}$ when it starts executing it. During the execution of this function $\mathcal{M}$ grows, augmented with the edges that are traversed by the agent. The two input parameters $l\geq 1$ and $b\geq0$ are integers that bring constraints to the execution of the depth-first search. The first indicates the limit depth of the search, while the second indicates an upper bound on the number of distinct edges the agent can traverse during the search: when this bound is violated, the agent stops the search and goes back to the node it occupied at the beginning of the search. The return value of {\tt CDFS}$(l,b)$ is a couple $(n,T)$. The first term $n$ is an integer such that $b-n$ is the number of distinct edges that have been traversed during the execution of {\tt CDFS}$(l,b)$. If the bound $b$ has been respected then $n\geq 0$, otherwise $n=-1$. Concerning the second term $T$ of the return value, it simply corresponds to the resulting DFS tree of the execution of {\tt CDFS}$(l,b)$. If $n\geq0$ and $v$ is the occupied node at the start of {\tt CDFS}$(l,b)$, then for every node $u$ such that $d_{T}(u,v)<l$, $u$ is complete in $\mathcal{M}$ at the end of {\tt CDFS}$(l,b)$. Note that in the particular case where $l=1$ and $m=\perp$ in Algorithm~\ref{alg:GE}, the second argument of each call to {\tt CDFS} is always set to the degree of the node from which the function is executed (cf. line~\ref{secondtype1} of Algorithm~\ref{alg:GE}) in order to ensure that this node becomes complete in $\mathcal{M}$ at the end of the call.

\begin{algorithm}[H]
\caption{${\tt CDFS}(l, b)$\label{alg:CDFS}}
\SetKwFunction{MT}{MoveTo}
\SetKwFunction{CDFS}{CDFS}
\SetKwFunction{MG}{Merge}
$v :=$ the current node; $T := (\{v\},\emptyset)$; $bound := b$\;
\If{$l>0$}
{
Mark node $v$\;
\While{node $v$ is incomplete in $\mathcal{M}$ \AND $bound\geq 0$}
{
$pt_1 :=$ the smallest free port at node $v$ in $\mathcal{M}$\;\label{CDFSadd1}
Take port $pt_1$\;
$w :=$ the current node\;
                        $pt_2 :=$ the port by which the agent has just entered node $w$\;
\eIf{$v<w$}{
$K := (\{v,w\},\{(v,w,pt_1,pt_2)\})$\;
}
{
$K := (\{v,w\},\{(w,v,pt_2,pt_1)\})$\;
}
                        $\mathcal{M} := \mathcal{M}\sqcup K$; $bound := bound-1$\;\label{CDFSadd2}
\If{$w$ is not marked}{
$(bound,T') :=$ \CDFS{$l-1,bound$}\;
$T := T \sqcup T' \sqcup K$\;
}

       Take port $pt_2$\;

}
Unmark node $v$\;
}
\Return $(bound,T)$\;
\end{algorithm}

The case where $l\geq2$ and $m\ne\perp$ in Algorithm~\ref{alg:GE} relies on function {\tt LocalExpansion}. It is exactly here that we make use of the algorithmic technique of \cite{DKK} mentioned at the end of Section~\ref{sec:int}, which is based on a set of adequately pruned trees. In our solution, this set corresponds to the variable $\mathcal{T}$. It is a global variable like $\mathcal{M}$ and it is initialized to $\emptyset$ at the beginning of each call to {\tt GlobalExpansion} (cf. line~\ref{GE:initialize} of Algorithm~\ref{alg:GE}).
Let us consider the $i$th call $LE_i$ to {\tt LocalExpansion}$(l,b)$ made from node $L[i]$ during an execution of {\tt GlobalExpansion}$(l,m)$. At the end of $LE_i$, the return value of {\tt LocalExpansion}$(l,b)$ is an integer $n\geq-1$ such that $b-n$ is the number of distinct edges that have been traversed during $LE_i$ and that were not in $\mathcal{M}$ at the start of $LE_i$. Besides, in the case where $n\geq0$, at the end of $LE_i$ we can guarantee that for each incomplete node $u$ of $\mathcal{M}$, $d_{\mathcal{M}}(L[i],u)> l$ or $u$ is one of the last $|L|-i$ nodes of $L$ (i.e., a node of $L$ from which the agent has not yet executed {\tt LocalExpansion}$(l,b)$).

To see the algorithmic technique in question at work, let us focus on an iteration $I$ of the first while loop of Algorithm~\ref{alg:LE} occuring in $LE_i$. This iteration starts at node $L[i]$ and we will show in Section~\ref{sec:proof} that at the beginning of $I$, we necessarily have the following properties.
\vspace{-0.3cm}
\begin{itemize}
\item $\mathcal{T}$ is a set of node disjoint trees that are all subgraphs of $\mathcal{M}$.
\item For each tree $Tr$ of $\mathcal{T}$, $|Tr| \geq \lfloor\frac{l}{8}\rfloor$ if $Tr$ contains a node different from $L[i]$.
\item Every incomplete node of $\mathcal{M}$ belongs to a tree of $\mathcal{T}$ or is one of the last $|L|-i$ nodes of $L$.
\end{itemize}
\vspace{-0.2cm}
\begin{algorithm}[H]
\caption{${\tt LocalExpansion}(l, b)$\label{alg:LE}}
\SetKwFunction{CIN}{IncompleteNodes}
\SetKwFunction{N}{Nodes}
\SetKwFunction{PRUNE}{Prune}
\SetKwFunction{EXPLORE}{Explore}
\SetKwFunction{MT}{MoveTo}
\SetKwFunction{MG}{Merge}
$bound := b$; $v :=$ the current node\;\label{LE:beforeif}
\If{$v$ is incomplete in $\mathcal{M}$ \AND no tree of $\mathcal{T}$ contains node $v$}{\label{LE:if}
$\mathcal{T} := \mathcal{T}\cup \{(\{v\},\emptyset)\}$\;\label{LE:then}
}
\While{${\tt IncompleteNodes}(v,\mathcal{M},l) \cap {\tt Nodes}(\mathcal{T})\ne\emptyset$ \AND $bound\geq0$}{\label{LE:condition}
        $u :=$ the node with the smallest label in ${\tt IncompleteNodes}(v,\mathcal{M},l) \cap {\tt Nodes}(\mathcal{T})$\;\label{LE:conditionu}
        \MT{$\mathcal{M},u$}\;\label{LE:move}
        \PRUNE{$l$}\;\label{LE:prune}
        $bound :=$ \EXPLORE{$l, bound$}\;\label{LE:explore}
Remove from $\mathcal{T}$ every tree for which all the nodes are complete in $\mathcal{M}$\;\label{clean1}
        \While{there are two trees $T$ and $T'$ in $\mathcal{T}$ having a common node}{\label{clean20}
        $T'' :=$ the spanning tree produced by the BFS traversal of $T\sqcup T'$ from the node having the smallest label in $T\sqcup T'$\;
        $\mathcal{T} := (\mathcal{T}\setminus\{T,T'\})\cup\{T''\}$\;\label{clean2}
        }
Execute in the reverse order all the edge traversals that have been made since the beginning of the current iteration of the while loop\;\label{LE:backtrack}
}
\Return $bound$\;
\end{algorithm}

Let us examine what happens during iteration $I$. At the beginning of $I$, the agent follows a path of length at most $l$ from node $L[i]$ to a node $u$ that is incomplete in $\mathcal{M}$ (cf. line~\ref{LE:conditionu} of Algorithm~\ref{alg:LE}). By the first and third properties and the condition at line~\ref{LE:condition} of Algorithm~\ref{alg:LE}, node $u$ belongs to a unique tree $T_u\subseteq\mathcal{G}$ of $\mathcal{T}$. Once the agent occupies node $u$, the tree $T_u$ is pruned via the procedure {\tt Prune}$(l)$ at line~\ref{LE:prune} of Algorithm~\ref{alg:LE}. The pseudocode of procedure {\tt Prune} is detailed in Algorithm~\ref{alg:PR}. 

\begin{algorithm}[H]
\caption{${\tt Prune}(l)$\label{alg:PR}}
$v :=$ the current node\;
$T_v :=$ the tree of $\mathcal{T}$ containing node $v$\;
$\mathcal{T} := \mathcal{T}\setminus\{T_v\}$\;
Root $T_v$ at node $v$\;
\ForEach{node $u$ of $T_v$ such that $d_{T_v}(u,v)=\max\{1,\lfloor\frac{l}{4}\rfloor\}$}{
$T_u :=$ the subtree of $T_v$ rooted at $u$\;
\If{$\epsilon_{T_u}(u)\geq \lfloor\frac{l}{4}\rfloor-1$}
{
$\mathcal{T} := \mathcal{T}\cup\{T_u\}$\;
Remove from $T_v$ all nodes that belong to $T_u$ and all edges that are incident to a node of $T_u$\;
}
}
$\mathcal{T} := \mathcal{T}\cup \{T_v\}$\;
\end{algorithm}

In the context of iteration $I$, the pruning operation will transform $T_u$ into a tree $T'_u$ such that $\epsilon_{T'_u}(u)\leq \lfloor\frac{l}{2}\rfloor-1$, while preserving the three properties listed above: this offers two important advantages to which we will return at the end of this section. Once the pruning is done, the agent applies function {\tt Explore}$(l,bound)$, whose pseudocode is given in Algorithm~\ref{alg:EX}.

\begin{algorithm}[H]
\caption{${\tt Explore}(l, b)$\label{alg:EX}}
\SetKwFunction{MT}{MoveTo}
\SetKwFunction{CDFS}{CDFS}
$bound := b$; $i := 1$; $v :=$ the current node\;
$T :=$ the tree of $\mathcal{T}$ containing node $v$\;
$V :=$ array containing all the nodes of $T$ sorted in the order of the first visit through the DFS traversal of $T$ from node $v$\;
\While{$i\leq |V|$ \AND $bound\geq0$}{\label{EX:while}
\MT{$T,V[i]$}\;\label{EX:move}
\If{node $V[i]$ is incomplete in $\mathcal{M}$}{\label{EX:callCDFS0}
$(bound, T') := $ \CDFS{$\lfloor \frac{l}{2}\rfloor, bound$}\;\label{EX:callCDFS}
$\mathcal{T} := \mathcal{T}\cup \{T'\}$\;\label{EX:add}
}
}
\Return $bound$\;
\end{algorithm}

In the pseudocodes of {\tt LocalExpansion} and of {\tt Explore}, variable $bound$ corresponds at any stage to the number of remaining edges the agent is authorized to traverse outside of $B_{f}(\mathcal{G},s)$. In the context of iteration $I$, function {\tt Explore}$(l,bound)$ permits the agent to explore tree $T'_u$ and to execute function {\tt CDFS}$(\lfloor\frac{l}{2}\rfloor,bound)$ from the nodes of $T'_u$ that are incomplete in $\mathcal{M}$, as long as variable $bound$ remains non-negative. These executions of {\tt CDFS} occuring during the exploration of $T'_u$ create in turn trees that are added to $\mathcal{T}$ (cf. line~\ref{EX:add} of Algorithm~\ref{alg:EX}) and that contain the new incomplete nodes of $\mathcal{M}$. If the return value of function {\tt Explore}$(l,bound)$ is non-negative, we will show in Section~\ref{sec:proof} that all the nodes of $T'_u$ have become complete in $\mathcal{M}$. Under the same condition, we will also guarantee that each tree $Tr$, which has been added to $\mathcal{T}$ during the execution of function {\tt Explore}, contains an incomplete node only if $|Tr|\geq \lfloor\frac{l}{8}\rfloor$. Both these guarantees combined with lines~\ref{clean1} to~\ref{clean2} of Algorithm~\ref{alg:LE} will allow us to show that our three properties will be satisfied for the next iteration $I'$, if any, even if it occurs in another call to {\tt LocalExpansion} (in the same execution of {\tt GlobalExpansion}$(l,m)$). In particular, this is made possible by the fact that $\mathcal{T}$ is never reset between the calls to {\tt LocalExpansion} during the execution of the while loop of Algorithm~\ref{alg:GE}.

To fully appreciate the process accomplished during $I$, we need to come back to the two aforementioned advantages that are brought by the pruning operation. The first advantage concerns the height of $T'_u$. The fact that $\epsilon_{T'_u}(u)\leq \lfloor\frac{l}{2}\rfloor-1$ is a key element to control the maximal distance between the agent and node $s$. Without this, the agent could go too far from node $s$ and we would not be able to guarantee that the agent explores only edges of $B_{f+2l-1}(\mathcal{G},s)$ during the execution of {\tt GlobalExpansion}$(l,m)$ (which is a crucial property as pointed out in Section~\ref{sec:int}). The second advantage concerns the size of $T'_u$. The pruning operation preserves the second property, and thus (1) $T'_u$ corresponds to a tree containing only node $L[i]$ or (2) $|T'_u|\geq\lfloor\frac{l}{8}\rfloor$. This implies that the cost resulting from the moves of line~\ref{LE:move} of Algorithm~\ref{alg:LE} and line~\ref{EX:move} of Algorithm~\ref{alg:EX} is linear in the size of $T'_u$. Besides, if $bound$ is still non-negative at the end of {\tt Explore}$(l,bound)$, all the nodes of $T'_u$ have become complete (it is in particular the case for node $u$) and the tree is removed from $\mathcal{T}$ through line~\ref{clean1} of Algorithm~\ref{alg:LE}. After this removal, no edge of $T'_u$ will be an edge of another tree of $\mathcal{T}$ till the end of the execution of {\tt GlobalExpansion}$(l,m)$. As a result, if the return value of {\tt Explore}$(l,bound)$ is non-negative in $I$, we can associate the moves of line~\ref{LE:move} of Algorithm~\ref{alg:LE} and line~\ref{EX:move} of Algorithm~\ref{alg:EX} to at least one node that becomes complete during $I$ and to at least $\lfloor \frac{l}{8}\rfloor$ edges that will no longer be edges of any tree of $\mathcal{T}$ till the end of the execution of {\tt GlobalExpansion}$(l,m)$. In our analysis, this association will enable us to amortize efficiently the number of times the agent retraverses the edges that have been already explored during any previous iteration of the considered while loop. This will be a decisive argument to show the cost of $\mathcal{O}(e(f)+m)$ for the execution of {\tt GlobalExpansion}$(l,m)$ in the case where $l\geq2$ and $m\ne\perp$.

\vspace{-0.3cm}
\section{Correctness and complexity analysis}
\label{sec:proof}
\vspace{-0.2cm}
This section is dedicated to the proof of correctness and of complexity of Algorithm {\tt TreasureHunt}$(x)$ in the unrestricted model.  {\tt TreasureHunt}$(x)$ is an exploration algorithm that can be executed also if there is no treasure in $\mathcal{G}$. We first establish several exploration properties of our algorithm or of its components assuming that there is no treasure in $\mathcal{G}$. In fact, this assumption concerns all the lemmas (and only them) of this section and it will not be repeated in their statements in order to lighten the presentation. After the series of lemmas, we show the main result of this section, namely Theorem~\ref{theo:final0}, which specifies that our algorithm allows to find the treasure at a cost quasi-linear in $e(d)$. 

Throughout the proof of correctness, we will often have to consider the value of the global variable $\mathcal{M}$ before or after some executions. To this end, we introduce the following convention: given an execution $\mathcal{E}$ of Algorithm ${\tt TreasureHunt}(x)$ or some part of it, we denote by $M_1(\mathcal{E})$ the value of $\mathcal{M}$ at the beginning of $\mathcal{E}$ and by $M_2(\mathcal{E})$ the value of $\mathcal{M}$ at the end of $\mathcal{E}$.

We start by giving two lemmas concerning the function ${\tt CDFS}(l, b)$. They list some properties that will be useful in the sequel. They are direct consequences of Algorithm~\ref{alg:CDFS} and can be easily proved by induction on $l$.
\vspace{-0.1cm}
\begin{lemma}
\label{lem:CDFX}
Consider an execution $\mathcal{E}$ of function ${\tt CDFS}(l, b)$ from a node $u$ of $\mathcal{G}$ where $l\geq1$ and $b\geq0$ are integers. Assume that $M_1(\mathcal{E})\subseteq \mathcal{G}$. Execution $\mathcal{E}$ terminates at node $u$, and the agent always knows a path of length at most $l$ from node $u$ to its current node during $\mathcal{E}$.
\end{lemma}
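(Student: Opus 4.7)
The plan is to prove the lemma by induction on the depth parameter $l$, closely following the recursive structure of Algorithm~\ref{alg:CDFS}. The hypothesis $M_1(\mathcal{E})\subseteq \mathcal{G}$ is preserved as an invariant by inspection of lines~\ref{CDFSadd1}--\ref{CDFSadd2}, where only edges actually traversed in $\mathcal{G}$ are added to $\mathcal{M}$; this invariant will be implicitly reused whenever the induction hypothesis is applied to a recursive invocation.

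For the base case $l=1$, every recursive call inside the \texttt{if $l>0$} block is made with parameter $l-1=0$, and CDFS$(0,\cdot)$ fails the outer guard immediately and returns without touching the agent's position or $\mathcal{M}$. Hence each iteration of the while loop amounts to taking port $pt_1$ from $u$ to a neighbor $w$, doing nothing, and taking the symmetric port $pt_2$ back to $u$. Thus within an iteration the agent is either at $u$ or one port away from $u$, and between iterations it is at $u$; the loop terminates because each iteration adds a new edge incident to $u$ to $\mathcal{M}$ and decrements $bound$, so eventually $u$ becomes complete in $\mathcal{M}$ or $bound$ becomes negative, at which point the procedure returns at $u$.

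For the inductive step, assume the statement for $l-1\geq 1$. In CDFS$(l,b)$ the agent, starting at $u$, enters an iteration by taking $pt_1$ to $w$ (a recorded path of length $1$ from $u$), possibly invokes CDFS$(l-1, bound)$, and finally takes $pt_2$ back to $u$. Applying the induction hypothesis at $w$, the recursive call terminates at $w$ and throughout it the agent maintains a known path of length at most $l-1$ from $w$ to its current node; concatenating $pt_1$ in front gives a known path of length at most $l$ from $u$. After the recursive call the inverse port $pt_2$ returns the agent to $u$, so between iterations the agent is always at $u$. Termination of the outer loop follows exactly as in the base case, and the return statement is reached while occupying $u$.

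The only subtlety, and the main thing to be careful about, is that the recursive call may be made with $bound=-1$ (since $bound$ is decremented before the recursive invocation). This falls strictly outside the wording $b\geq 0$, but a single inspection of Algorithm~\ref{alg:CDFS} shows that such a call enters the \texttt{if $l>0$} branch, finds the while condition $bound\geq 0$ false, unmarks $v$ and returns $(-1, (\{v\},\emptyset))$ without moving the agent. So I would either slightly strengthen the induction hypothesis to allow $b\geq -1$, or simply handle this trivial case as a separate remark; either way the inductive step goes through without further effort.
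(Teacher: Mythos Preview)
Your proof is correct and follows exactly the approach the paper indicates: the paper does not give a detailed argument but states that the lemma ``can be easily proved by induction on $l$'', which is precisely what you carry out. Your observation about the recursive call possibly receiving $bound=-1$ is accurate and your proposed fix (extending the hypothesis to $b\geq -1$ or handling the trivial no-move return separately) is sound.
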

\vspace{-0.2cm}
\begin{lemma}
\label{lem:CDFX2}
Consider an execution $\mathcal{E}$ of function ${\tt CDFS}(l, b)$ from a node $u$ of $\mathcal{G}$ where $l\geq1$ and $b\geq0$ are integers. Assume that $M_1(\mathcal{E})\subseteq \mathcal{G}$. Function ${\tt CDFS}(l, b)$ returns a couple $(i,Tr)$ such that the following properties are satisfied.
\vspace{-0.1cm}
\begin{itemize}
\item Let $G$ be the subgraph of $\mathcal{G}$ that has been explored during $\mathcal{E}$. $G\subseteq B_l(\mathcal{G},u)$, $|M_1(\mathcal{E})\sqcap G|=0$, $M_1(\mathcal{E})\sqcup G = M_2(\mathcal{E})$, $Tr$ is a spanning tree of $G$ and $i=b-|G|\geq -1$.
\item The cost of $\mathcal{E}$ is $2|G|$ and $\epsilon_{Tr}(u)\leq l$.
\item If $i\geq0$ then for every node $v$ of $Tr$ such that $d_{Tr}(u,v)<l$, $v$ is complete in $M_2(\mathcal{E})$. If $i=-1$, then there exists a node $v$ of $Tr$ such that $d_{Tr}(u,v)\leq l-1$ and $v$ is incomplete in $M_2(\mathcal{E})$.
\end{itemize}
\end{lemma}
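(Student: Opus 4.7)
The proof proceeds by induction on $l$, mirroring the recursive call structure of Algorithm~\ref{alg:CDFS}. The base case $l=1$ is immediate: the internal recursive call ${\tt CDFS}(0, bound)$ never enters its conditional block and returns with $bound$ unchanged, so each iteration of the while loop at $u$ traverses a fresh edge $K_j$ via the smallest free port, appends $K_j$ together with its far endpoint $w_j$ to $T$, extends $\mathcal{M}$ by $K_j$, and decrements $bound$ by one. The properties requested, namely inclusion in $B_1(\mathcal{G},u)$, disjointness from $M_1(\mathcal{E})$, the update identity $M_1(\mathcal{E})\sqcup G=M_2(\mathcal{E})$, the fact that $Tr$ spans $G$, the eccentricity bound $\epsilon_{Tr}(u)\leq 1$, the cost $2|G|$, and $i=b-|G|\geq -1$, all follow by direct inspection of this loop.

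For the inductive step $l\geq 2$, label the successive iterations of the while loop by the pairs $(w_j, K_j)$ they produce. The ``smallest free port'' selection rule guarantees that $K_j$ does not belong to $\mathcal{M}$ when iteration $j$ begins, so $K_j$ is genuinely new. The plan is then to apply the inductive hypothesis to each recursive call ${\tt CDFS}(l-1, bound_j)$ from $w_j$: it yields a subgraph $G'_j\subseteq B_{l-1}(\mathcal{G},w_j)$ disjoint from the value of $\mathcal{M}$ at the recursion's start, together with a spanning tree $T'_j$ of $G'_j$ of eccentricity at most $l-1$ from $w_j$, cost $2|G'_j|$, and bound consumption $|G'_j|$.

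Setting $G:=\bigsqcup_j (K_j\sqcup G'_j)$ and $Tr:=(\{u\},\emptyset)\sqcup\bigsqcup_j(T'_j\sqcup K_j)$, the first two bullets follow by aggregation: $G\subseteq B_l(\mathcal{G},u)$ since $w_j$ is adjacent to $u$; disjointness from $M_1(\mathcal{E})$ and the $\mathcal{M}$-update identity propagate directly from the IH; $Tr$ spans $G$ by construction; $\epsilon_{Tr}(u)\leq 1+(l-1)=l$; the cost is $\sum_j 2(1+|G'_j|)=2|G|$; and each explored edge decrements $bound$ exactly once (either by the explicit assignment or through the recursive return), so $i=b-|G|$, with $i=-1$ occurring precisely when the guard $bound\geq 0$ eventually fails.

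The most delicate part is the completeness bullet. When $i\geq 0$, the while loop at $u$ has exited because $u$ became complete in $\mathcal{M}$, and the IH applied to each recursive call with non-negative return certifies that every node of $T'_j$ at distance strictly less than $l-1$ from $w_j$ is complete in $\mathcal{M}$; together these cover every node of $Tr$ at distance strictly less than $l$ from $u$. When $i=-1$, I would follow the recursion branch on which $bound$ first fell to $-1$: the IH applied there produces an incomplete node $v$ at distance at most $l-2$ from $w_j$ in $T'_j$, which lifts to distance at most $l-1$ from $u$ in $Tr$ through the tree edge $K_j$. The main obstacle is precisely this last step: one must verify that the witness delivered by the deepest failing recursion actually sits inside $Tr$ and respects the distance bound, which requires a careful case split on whether the recursion aborted at its own root or inside a still deeper subcall. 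Once the tree-assembly bookkeeping is written down, this propagation through the recursion stack is routine.
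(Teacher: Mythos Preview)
Your induction on $l$ following the recursion of Algorithm~\ref{alg:CDFS} is exactly the approach the paper announces; the paper gives no further detail beyond the sentence that the lemma ``can be easily proved by induction on $l$,'' so your sketch is already more explicit than what appears there.

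The one place where your argument is too quick is the application of the inductive hypothesis to the subcalls. The recursive invocation ${\tt CDFS}(l-1,bound_j)$ from $w_j$ executes with the ancestor $u$ already marked. If the subexploration meets a back-edge to $u$ (or, deeper down, to any node currently on the recursion stack), that edge is traversed and added to $\mathcal{M}$---hence it belongs to $G'_j$---but the test ``$w$ is not marked'' fails, so neither the edge nor the far endpoint is attached to $T'_j$. Consequently the clause ``$T'_j$ is a spanning tree of $G'_j$'' is not literally available from the hypothesis in the context of the subcall. The usual fix is to carry a slightly stronger statement through the induction: allow an arbitrary set $S$ of pre-marked nodes and replace the spanning clause by ``$Tr$ contains every node of $G$ that is not in $S$.'' The top-level instance (with $S=\emptyset$) then recovers the lemma, since at that level every node ever marked is an ancestor already present in $Tr$. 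Once this is made explicit the aggregation step goes through, but as written your inductive hypothesis does not match the environment in which you apply it.
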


The following lemma establishes the properties of function ${\tt GlobalExpansion}(l,m)$ that will be used to prove Lemma~\ref{lem:search1} that concerns procedure {\tt Search}$(x)$.

\begin{lemma}
\label{lem:GE2}
Consider an execution $\mathcal{E}$ of function ${\tt GlobalExpansion}(l,m)$ from the source node $s$, where $l$ is a positive integer and $m$ is either a positive integer or $\perp$. Assume that $M_1(\mathcal{E})=B_f(\mathcal{G},s)$ for some integer $f\geq0$.
\begin{itemize}
\item if $m\ne\perp$, or $m=\perp$ and $l=1$, then $\mathcal{E}$ terminates at node $s$ and during $\mathcal{E}$ the agent always knows a path in $\mathcal{G}$ of length at most $f+2l-1$ from node $s$ to its current node.
\item If $m=\perp$ and $l=1$ then the cost of $\mathcal{E}$ is $\mathcal{O}(e(f+1))$ and $B_{f+1}(\mathcal{G},s)= M_2(\mathcal{E})$.
\item If $m\ne\perp$ and function ${\tt GlobalExpansion}(l,m)$ returns true (resp. false) then $B_{f+l}(\mathcal{G},s)\subseteq M_2(\mathcal{E})$ (resp. $B_{f}(\mathcal{G},s)\subseteq M_2(\mathcal{E})$ and $e(f+2l-1)>e(f)+m$) and the cost of $\mathcal{E}$ is $\mathcal{O}(e(f)+m)$.
\end{itemize}
\end{lemma}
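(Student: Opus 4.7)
My approach is to split the proof into the three cases that mirror the bullets of the statement, with the bulk of the work going into the $m\neq\perp$ case. The pseudocode of {\tt GlobalExpansion} (Algorithm~\ref{alg:GE}) organizes the execution as a walk along the boundary nodes $L[1],\dots,L[|L|]$ of $\mathcal{M}=B_f(\mathcal{G},s)$, realized via {\tt MoveTo} on the DFS tree $T$ of $\mathcal{M}$ computed in line~\ref{DFStree}, punctuated by calls to {\tt CDFS}$(1,\cdot)$ (if $l=1$) or {\tt LocalExpansion}$(l,\cdot)$ (if $l\geq 2$), and terminated by a final {\tt MoveTo}$(T,v)$ back to $s$. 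Termination at $s$ and the fact that the agent always knows a path of length at most $f+2l-1$ from $s$ will be established by combining three observations: (a) $T$ spans $\mathcal{M}\subseteq B_f(\mathcal{G},s)$, so while executing {\tt MoveTo} calls the agent is at distance at most $f$ from $s$; (b) by Lemma~\ref{lem:CDFX} the agent stays within distance $l$ (resp. $\lfloor l/2\rfloor$) of the starting node of any {\tt CDFS}$(l,\cdot)$ call; (c) for {\tt LocalExpansion}$(l,\cdot)$, an auxiliary claim—to be proved by inspecting Algorithms~\ref{alg:LE},~\ref{alg:PR},~\ref{alg:EX}—will show that after {\tt Prune}$(l)$ the current tree $T'_u$ satisfies $\epsilon_{T'_u}(u)\leq \lfloor l/2\rfloor-1$, so {\tt Explore} carries the agent to at most distance $\lfloor l/2\rfloor-1$ from $u$ and the nested {\tt CDFS}$(\lfloor l/2\rfloor,\cdot)$ adds at most $\lfloor l/2\rfloor$; summing with the path of length at most $l$ from $L[i]$ to $u$ and with the distance $f$ from $s$ to $L[i]$ yields the bound $f+2l-1$.

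For the case $m=\perp,\ l=1$, the proof is short. The boundary of $B_f(\mathcal{G},s)$ contains every node at distance exactly $f$, and the new edges of $B_{f+1}(\mathcal{G},s)$ relative to $B_f(\mathcal{G},s)$ are exactly those incident to such nodes. By Lemma~\ref{lem:CDFX2}, running {\tt CDFS}$(1,deg(L[i]))$ from $L[i]$ leaves $L[i]$ complete in $\mathcal{M}$, so after the loop $B_{f+1}(\mathcal{G},s)\subseteq M_2(\mathcal{E})$; the converse inclusion follows because at every step the edges added to $\mathcal{M}$ have at least one endpoint in $L[i]$, i.e.\ at distance $f$ from $s$. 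For the cost, the successive {\tt MoveTo}$(T,L[i])$ plus the final {\tt MoveTo}$(T,v)$ amount to a full DFS traversal of $T$, costing $O(|T|)=O(e(f))$; each {\tt CDFS}$(1,deg(L[i]))$ costs $2|G_i|$ where $G_i$ is the newly explored subgraph (Lemma~\ref{lem:CDFX2}), and the $G_i$ are edge-disjoint subsets of the new edges of $B_{f+1}(\mathcal{G},s)$, so the total is $O(e(f+1))$.

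For $m\neq\perp$, the plan is to prove by induction on $i$ three loop invariants of the while loop of Algorithm~\ref{alg:GE}, which are precisely the three properties listed in the intuition section: ($P_1$) $\mathcal{T}$ is a set of node-disjoint subtrees of $\mathcal{M}$, ($P_2$) every tree of $\mathcal{T}$ either reduces to one of the nodes $L[j]$ with $j\geq i$ or has size at least $\lfloor l/8\rfloor$, ($P_3$) every incomplete node of $\mathcal{M}$ lies in some tree of $\mathcal{T}$ or belongs to $\{L[i],\dots,L[|L|]\}$. The inductive step uses: Lemma~\ref{lem:CDFX2} to track what {\tt CDFS}$(\lfloor l/2\rfloor,\cdot)$ appends inside {\tt Explore}; the specification of {\tt Prune} (the trees it creates have eccentricity at most $\lfloor l/2\rfloor-1$ from their new root and, by the condition $\epsilon_{T_u}(u)\geq \lfloor l/4\rfloor-1$, size at least $\lfloor l/8\rfloor$ when they are not trivial); and the merging lines~\ref{clean20}--\ref{clean2} of Algorithm~\ref{alg:LE}, which restore node-disjointness while only growing trees. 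If the function returns true, then $b\geq 0$ after every iteration, so $P_3$ applied after iteration $|L|$ gives that no incomplete node of $\mathcal{M}$ lies within distance $l$ in $\mathcal{M}$ of any $L[i]$; combining this with the definition of the boundary and the distance bound of the agent, every edge of $B_{f+l}(\mathcal{G},s)$ is in $M_2(\mathcal{E})$. If the function returns false, some call returned $-1$, so the total number of distinct edges outside $B_f(\mathcal{G},s)$ traversed during $\mathcal{E}$ strictly exceeds $m$; since the agent never left $B_{f+2l-1}(\mathcal{G},s)$ by the distance bound, these edges all lie in $B_{f+2l-1}(\mathcal{G},s)\setminus B_f(\mathcal{G},s)$, giving $e(f+2l-1)>e(f)+m$.

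The main obstacle is the cost bound $O(e(f)+m)$. I will split the cost into: (i) the cost of moving along $T$ between successive $L[i]$'s and back to $s$, totaling $O(e(f))$ by a DFS argument; (ii) the cost of all {\tt CDFS} invocations, which by Lemma~\ref{lem:CDFX2} is twice the number of newly explored edges, bounded by $2(m+1)$ since each call either respects its $bound$ parameter or returns $-1$ and aborts {\tt GlobalExpansion}; and (iii) the cost of re-traversals inside {\tt LocalExpansion}—moves to $u$ on line~\ref{LE:move}, walks inside $T'_u$ during {\tt Explore}, and the symmetric backtrack on line~\ref{LE:backtrack}. The crucial amortization argument, which is the technical heart of the complexity analysis and the reason the building block had to be tailored rather than taken from \cite{DKK}, is to charge each such re-traversal to the pruned tree $T'_u$ handled in the current iteration: by the Prune specification $T'_u$ either is the trivial tree $\{L[i]\}$ (which contributes cost $0$ to these re-traversals because $u=L[i]$) or has at least $\lfloor l/8\rfloor$ nodes, and the cost of the iteration is $O(|T'_u|)$ plus the cost of the {\tt CDFS} calls issued from $T'_u$. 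Since, upon successful return from {\tt Explore}, $T'_u$ becomes entirely complete and is discarded at line~\ref{clean1}, its edges are never re-traversed in subsequent iterations of the surrounding loop, so the total cost of type (iii) is proportional to the total number of edges ever placed in some tree of $\mathcal{T}$, which is at most $m+e(f)$. Summing (i)--(iii) yields $O(e(f)+m)$, completing the proof.
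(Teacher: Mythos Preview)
Your plan mirrors the paper's proof almost exactly: the same three-way split of the cost (boundary walk along the DFS tree $T$, CDFS exploration, and the ``red/blue'' re-traversals inside {\tt LocalExpansion}), the same three invariants $P_1,P_2,P_3$ on $\mathcal{T}$, and the same amortization charging each iteration to the pruned tree $T'_u$ that gets discarded at line~\ref{clean1}. Two places where your sketch is looser than the paper and would need tightening: first, the invariants must be maintained at the level of each iteration of the inner loop $W1$ of {\tt LocalExpansion}, not just the outer loop of {\tt GlobalExpansion}, since the amortization is per $W1$-iteration and since $P_2$ in the form you state (``reduces to some $L[j]$ with $j\geq i$'') is not quite what survives Prune---the paper's version is that any tree containing a node other than the current $L[i]$ has size $\geq\lfloor l/8\rfloor$; second, the implication ``$P_3$ after iteration $|L|$ $\Rightarrow$ no incomplete node within distance $l$ of any $L[i]$'' does not follow from $P_3$ alone, because $P_3$ only says incomplete nodes sit in $\mathcal{T}$---you also need the termination condition of $W1$ (the set ${\tt IncompleteNodes}(L[i],\mathcal{M},l)\cap{\tt Nodes}(\mathcal{T})$ is empty when $bound\geq 0$), and the paper handles the possibility that later $LE_k$'s create new incomplete nodes near an earlier $L[j]$ by a contradiction argument choosing the $L[k]$ closest to the offending node. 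Finally, your amortization implicitly uses that once a tree is removed at line~\ref{clean1} its edges never re-enter $\mathcal{T}$; this is the paper's Claim~\ref{claim:remove} and deserves an explicit sentence.
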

\begin{proof}
First observe that the global variable $\mathcal{M}$ is always a subgraph of $\mathcal{G}$ during $\mathcal{E}$. This comes from the fact that $M_1(\mathcal{E})=B_f(\mathcal{G},s)$, and from lines~\ref{CDFSadd1} to~\ref{CDFSadd2} of Algorithm~\ref{alg:CDFS} that are the only places where $\mathcal{M}$ may be modified during $\mathcal{E}$. By line~\ref{DFS} of Algorithm~\ref{alg:GE}, $L$ is an array containing all the nodes of the boundary of $B_f(\mathcal{G},s)$. The nodes of $B_f(\mathcal{G},s)$ that are not in $L$ are necessarily complete in $M_1(\mathcal{E})$ (or otherwise we get a contradiction with the fact that $M_1(\mathcal{E})=B_f(\mathcal{G},s)$).

According to Algorithm~\ref{alg:GE}, the edge traversals made during $\mathcal{E}$ can be divided into three distinct types. The first type corresponds to those that aim to position the agent at each node of $L$ (cf. line~\ref{firsttype} of Algorithm~\ref{alg:GE}). The second type consists of edge traversals that aim to expand $\mathcal{M}$ (cf. lines~\ref{secondtype1}, \ref{secondtype1bis} and~\ref{secondtype2} of Algorithm~\ref{alg:GE}).  The third type consists of edge traversals that permit to relocate the agent at node $s$ at the end of $\mathcal{E}$ (cf. line~\ref{thirdtype} of Algorithm~\ref{alg:GE}). The total number of edge traversals made by the agent will be analysed below and will vary according to different cases. However, with a few arguments, we can already give some properties of the first and third types, in particular concerning their order of magnitude.

Note that an execution of {\tt CDFS}$(1,deg(L[i]))$ (resp. {\tt CDFS}$(1,b)$) in Algorithm~\ref{alg:GE} from a node $L[i]$ starts and ends at $L[i]$ according to Lemma~\ref{lem:CDFX}. Also note that if an execution of {\tt LocalExpansion} from $L[i]$ terminates (this will be shown below), the agent is back at $L[i]$ at the end of this execution in view of line~\ref{LE:backtrack} of Algorithm~\ref{alg:LE}. From the above explanations, it follows that for all $2\leq i\leq|L|$, the agent is at node $L[i-1]$ when it starts executing the $i$th iteration of the while loop of Algorithm~\ref{alg:GE}. It also follows that the agent is at node $L[|L|]$ when it starts executing line~\ref{thirdtype} of Algorithm~\ref{alg:GE}. As a result, we have the following claim owing to the fact that the nodes of $L$ are sorted in the order of the first visit through the DFS traversal of $B_f(\mathcal{G},s)$ from node $s$, and the fact that the agent always takes the shortest path in the tree produced by this traversal when it executes the move instructions of lines~\ref{firsttype} and~\ref{thirdtype} of Algorithm~\ref{alg:GE}.

\begin{claim}
\label{claim:GE1}
The total cost induced by the edge traversals belonging to the first or third type in $\mathcal{E}$ is $\mathcal{O}(e(f))$. Moreover, at the beginning and at the end of each edge traversal of the first or third type in $\mathcal{E}$, the agent knows a path of length at most $f$ from node $s$ to its current node. Finally, if $l=1$ or if each execution of {\tt LocalExpansion} in $\mathcal{E}$ terminates, then $\mathcal{E}$ terminates at node $s$.
\end{claim}

We first prove the lemma in the easiest case where $l=1$. The nodes of $L$ are all at distance at most $f$ from node $s$ in $B_f(\mathcal{G},s)$. If $m=\perp$, the agent executes function ${\tt CDFS}(1,deg(L[i]))$ from each node $L[i]$. If $m\ne\perp$, the agent executes ${\tt CDFS}(1,b)$
also from the nodes of $L$, but not necessarily all of them if $b$ becomes negative. Let $G$ be the subgraph of $\mathcal{G}$ that is explored during the executions of function {\tt CDFS}, whether $m$ is $\perp$ or not. Using Lemma~\ref{lem:CDFX2}, it follows by induction on the number of calls to {\tt CDFS} that $M_2(\mathcal{E})=M_1(\mathcal{E})\sqcup G \subseteq B_{f+1}(\mathcal{G},s)$, $|B_f(\mathcal{G,s})\sqcap G|=0$ and the total cost induced by these calls is $2|G|$: in particular, if $m\ne\perp$, variable $b$ is equal to $m-|G|\geq-1$ at the end of $\mathcal{E}$. Using the same lemma, it follows that all the nodes of $L$ are complete in $M_2(\mathcal{E})$ if $m=\perp$, or if $m\ne\perp$ and $b\geq0$ at the end of $\mathcal{E}$. Finally, at the beginning and at the end of each move made during each execution of {\tt CDFS} from a node $L[i]$, the agent knows a path of length at most $1$ from $L[i]$.

From Claim~\ref{claim:GE1} and the above explanations, it follows that $\mathcal{E}$ terminates at node $s$ and during $\mathcal{E}$ the agent always knows a path in $\mathcal{G}$ of length at most $f+1$ from node $s$ to its current node. It also follows that if $m=\perp$ (resp. $m\ne\perp$) the cost of $\mathcal{E}$ is $\mathcal{O}(e(f+1))$ (resp. $\mathcal{O}(e(f)+m)$ as $|G|\leq m+1$). Besides, if $m=\perp$, or $m\ne\perp$ and $b\geq0$ at the end of $\mathcal{E}$, then $B_{f+1}(\mathcal{G},s)= M_2(\mathcal{E})$ and the return value of ${\tt GlobalExpansion}(l,m)$ is true according to line~\ref{GE:lastline} of Algorithm~\ref{alg:GE}. Otherwise, $B_{f}(\mathcal{G},s)\subseteq M_2(\mathcal{E})$, the return value of ${\tt GlobalExpansion}(l,m)$ is false, and $e(f+1)>e(f)+m$ (since the last value of $b$ is $-1$ and $|B_f(\mathcal{G},s)\sqcap G|=0$, we have $|M_1(\mathcal{E})\sqcup G|=e(f)+m+1$). This proves the lemma in the case where $l=1$.

Let us now turn our attention to the main case where $m\ne \perp$ and $l\geq2$. The analysis of this case will involve the global variable $\mathcal{T}$. Strictly speaking, the value of this variable will be always a set of trees. However, if a node (resp. edge) belongs to a tree of $\mathcal{T}$, we will sometimes say by abuse of language that it is a node (resp. an edge) of $\mathcal{T}$. Still by abuse of language, we will sometimes say that a node (resp. an edge) has been removed from $\mathcal{T}$, if at some point this node (or this edge) no longer belongs to any tree of $\mathcal{T}$. The first while loop of Algorithm~\ref{alg:LE} will be called $W1$, and when we speak of variable $bound$, it will be always implied it is the variable of Algorithm~\ref{alg:LE} unless explicitly mentioned otherwise. The $i$th execution of function {\tt LocalExpansion}$(l,b)$, if any, made within $\mathcal{E}$ will be denoted by $LE_i$. The starting node of $LE_i$ is node $L[i]$ according to line~\ref{firsttype} of Algorithm~\ref{alg:GE} and line~\ref{LE:backtrack} of Algorithm~\ref{alg:LE}.

We start with two claims. 

\begin{claim}
\label{claim:GE2}
For every $1\leq i\leq |L|$, execution $LE_i$ terminates, and during it the agent always knows a path of length at most $2l-1$ from node $L[i]$ to its current node. Moreover, at the beginning and at the end of each iteration of $W1$ made during $LE_i$, we have the following three properties.
\begin{itemize}
\item The agent is at node $L[i]$.
\item Variable $\mathcal{T}$ is a set of node disjoint trees that are all subgraphs of $\mathcal{M}$. 
\item Every incomplete node in $\mathcal{M}$ is a node of $\mathcal{T}$ or one of the last $|L|-i$ nodes of $L$.
\end{itemize}
\end{claim}

\begin{proofclaim}
Consider an integer $1\leq i\leq |L|$ and suppose that the agent ends up executing $LE_i$. Let us first analyse what happens during a given iteration $I$ of $W1$ made during $LE_i$, assuming that at the beginning of the considered iteration the three properties of the claim are satisfied. The existence of $I$ implies that $bound$ is non-negative at the start of $I$.

Once the execution of the move instruction of line~\ref{LE:move} of Algorithm~\ref{alg:LE} has been made, the agent occupies a node $u$ of $M_1(I)$. In view of line~\ref{LE:conditionu}, node $u$ is incomplete in $M_1(I)$ and the agent has reached it by following a path of length at most $l$ from $L[i]$. By assumption, node $u$ belongs to a unique tree $T_u$ of $\mathcal{T}$ and $T_u\subseteq M_1(I)$. By Algorithm~\ref{alg:PR}, we know that after the execution of function {\tt Prune}$(l)$ at line~\ref{LE:prune} of Algorithm~\ref{alg:LE}, $\mathcal{T}$ is unchanged, or $T_u$ has been replaced in $\mathcal{T}$ by smaller node disjoint trees that are all subgraphs of $T_u$ and whose union spans $T_u$. In particular, we have $\epsilon_{T'_u}(u)\leq \lfloor \frac{l}{2}\rfloor-1$, where $T'_u$ is the tree of $\mathcal{T}$ containing node $u$ after the pruning operation. After that, the agent executes function {\tt Explore}$(l,bound)$ from node $u$. By Algorithm~\ref{alg:EX}, this execution  consists of a traversal of $T'_u$ interlaced with executions of {\tt CDFS} from the incomplete nodes of $T'_u$ (from each of these nodes, the distance in $M_1(I)$ to node $L[i]$ and to node $s$ are respectively at most $l+\lfloor \frac{l}{2}\rfloor-1$ and at most $f+l+\lfloor \frac{l}{2}\rfloor-1$). The first parameter of these executions of {\tt CDFS} is $\lfloor \frac{l}{2}\rfloor$. Hence, in view of Lemma~\ref{lem:CDFX2}, we know that after the execution of function {\tt Explore} in line~\ref{LE:explore} of Algorithm~\ref{alg:LE}, variable $bound\geq-1$ and variable $\mathcal{M}$ has been extended by the subgraph $K$ of $\mathcal{G}$ that has been explored in the calls to {\tt CDFS} made during {\tt Explore}$(l,bound)$. In view of the same arguments, we know that the trees added to $\mathcal{T}$ during the execution of function {\tt Explore} are all subgraphs of $\mathcal{M}$ at the end of this execution: precisely, the union of these added trees forms a spanning subgraph
 of $K$, and thus the third property of the claim is still satisfied. In addition, during the execution of {\tt Explore}$(l,bound)$, the agent always knows a path of length at most $l-1$ from node $u$ to its current node, due to Lemma~\ref{lem:CDFX} and to the fact that $\epsilon_{T'_u}(u)\leq \lfloor \frac{l}{2}\rfloor-1$. Note that once the process of line~\ref{LE:explore} of Algorithm~\ref{alg:LE} is over, $\mathcal{T}$ is indeed still a set of trees that are all subgraphs of $\mathcal{M}$, but some trees of $\mathcal{T}$ may be not node disjoint. This is resolved through the executions of lines~\ref{clean1} to~\ref{clean2} of Algorithm~\ref{alg:LE}, that will permit to guarantee the second property of the claim while preserving the third property.

Finally, taking into account line~\ref{LE:backtrack} of Algorithm~\ref{alg:GE} that consists of an execution in the reverse order of all the edge traversals that have been previously made in $I$, it follows from the above explanations that $I$ terminates and the agent always knows a path of length at most $2l-1$ from $L[i]$ to its current node during $I$. It also follows that the conditions that are supposed to be satisfied at the beginning of $I$, are still satisfied at the end of $I$. If variable $bound$ is negative at the end of $I$, there will be no more iterations of $W1$ thereafter in $LE_i$ and even in $\mathcal{E}$.

Note that, outside of $W1$, the position of the agent does not change within $LE_i$. It is also the case for variable $\mathcal{T}$, except just before $W1$ where the node $L[i]$ is added into $\mathcal{T}$ if and only if $L[i]$ is incomplete in $\mathcal{M}$ and $L[i]$ belongs to no tree of $\mathcal{T}$. Also note that at the beginning of the first call to function {\tt LocalExpansion}$(l,b)$, we have $b=m$, $\mathcal{T}=\emptyset$ and the agent occupies node $L[1]$. In particular, this implies that the initial assumptions made for the analysis of iteration $I$ are satisfied just before the execution of $W1$ in the first call to function {\tt LocalExpansion}. Hence, it follows by induction on $i$, that all the statements of the claim hold, except the statement that $LE_i$ terminates if $bound$ is never negative. More specifically, it can be shown that each iteration of $W1$ in $LE_i$ terminates, but at this point of the proof, we are not yet sure that the number of these iterations is finite if $bound$ is never negative.


So, to conclude the proof of this claim, it remains to show that the number of iterations of $W1$ in $LE_i$ is finite assuming that variable $bound$ is never negative in $LE_i$. During an iteration $I$ of $LE_i$, the node $u$ to which the agent moves when executing line~\ref{LE:move} of Algorithm~\ref{alg:LE} is necessarily incomplete in $M_1(I)$ and such that $d_{M_1(I)}(L[i],u)\leq l$. {Using Lemma~\ref{lem:CDFX2}, we can prove that node $u$ becomes complete in $\mathcal{M}$ after the first call to {\tt CDFS} made within the execution of ${\tt Explore}(l,bound)$ in $I$ because the return value of this execution, which becomes by then the value of $bound$, is necessarily at least $0$ (otherwise we get a contradiction with the assumption that variable $bound$ is never negative in $LE_i$)}. Since the number of nodes $u$ such that $d_\mathcal{G}(l[i],u)\leq l$ is finite, it follows that the number of iterations in $LE_i$ is finite even if variable $bound$ is never negative in $LE_i$. This concludes the proof of the claim.
\end{proofclaim}

In view of Claims~\ref{claim:GE1} and~\ref{claim:GE2}, we are guaranteed that $\mathcal{E}$ terminates (at node $s$) and the number of calls to {\tt CDFS} made during $\mathcal{E}$ is finite. As pinpointed in the proof of Claim~\ref{claim:GE2}, these calls are triggered only through the executions of function {\tt Explore} from nodes that necessarily belong to $B_{f+l+\lfloor \frac{l}{2} \rfloor-1}(\mathcal{G},s)$. Hence, using Lemma~\ref{lem:CDFX2}, the next claim can be shown by induction on the number of calls to {\tt CDFS} made during $\mathcal{E}$. In this claim and in the rest of this proof, $H$ denotes the subgraph of $\mathcal{G}$ consisting of all the edges and nodes that have been visited by the agent during the calls to {\tt CDFS} within $\mathcal{E}$.

\begin{claim}
\label{claim:GE4}
The total cost of the executions of {\tt CDFS} in $\mathcal{E}$ is $2|H|$, $M_2(\mathcal{E})=M_1(\mathcal{E})\sqcup H\subseteq B_{f+2l-1}(\mathcal{G},s)$ and $|M_1(\mathcal{E})\sqcap H|=0$. Moreover, at the end of $\mathcal{E}$, the value of variable $b$ in Algorithm~\ref{alg:GE} is $m-|H|\geq -1$.
\end{claim}

Note that in view of Claims~\ref{claim:GE1} and~\ref{claim:GE2}, the agent always knows a path of length at most $f+2l-1$ from node $s$ to its current node during $\mathcal{E}$. Suppose that $m-|H|=-1$. In this case, function ${\tt GlobalExpansion}(l,m)$ returns false, and $B_{f}(\mathcal{G},s)\subseteq M_2(\mathcal{E})$ as $M_2(\mathcal{E})=B_f(\mathcal{G},s)\sqcup H$ by Claim~\ref{claim:GE4}. Moreover, $M_1(\mathcal{E})\sqcup H\subseteq B_{f+2l-1}(\mathcal{G},s)$ and $|B_f(\mathcal{G},s)\sqcap H|=0$ by Claim~\ref{claim:GE4}, and $|H|>m$, which implies that $e(f+2l-1)> e(f)+m$.

Now, suppose that that $m-|H|\geq 0$. In this case, variable $bound$ is never negative and function ${\tt GlobalExpansion}(l,m)$ returns true. Moreover, we have the following claim.

\begin{claim}
$B_{f+l}(\mathcal{G},s)\subseteq M_2(\mathcal{E})$.
\end{claim}

\begin{proofclaim}
Assume by contradiction that the claim does not hold. Since $M_2(\mathcal{E})=B_f(\mathcal{G},s) \sqcup H$, it follows that there exist an integer $k$ and a node $u$ of $M_2(\mathcal{E})$ such that $d_{M_2(\mathcal{E})}(L[k],u)\leq l$ and $u$ is incomplete in $M_2(\mathcal{E})$. Without loss of generality, suppose that in $M_2(\mathcal{E})$, $L[k]$ is the node of $L$ that is the closest (or one the closest) from node $u$. This implies that, at the end of each iteration of $W1$ made during $LE_k$, there is a node $v$ that does not belong to the last $|L|-k$ nodes of $L$, that is incomplete in $\mathcal{M}$ and that is such that $d_{\mathcal{M}}(L[k],v)\leq l$. Hence, the execution $LE_k$ never terminates, as the condition of $W1$ (cf. line~\ref{LE:condition} of Algorithm~\ref{alg:LE}) always evaluates to true during $LE_k$ in view of Claim~\ref{claim:GE2} and of the fact that variable $bound$ is never negative. We then get a contradiction with the fact that $\mathcal{E}$ terminates, which concludes the proof of the claim.
\end{proofclaim}

Consequently, to end the analysis of the current case (and thus the proof of this lemma), it remains to prove that the cost that has been paid during $\mathcal{E}$ is $\mathcal{O}(e(f)+m)$. More precisely, in view of Claim~\ref{claim:GE1}, it is enough to show that the number of moves of the second type (which correspond here to the moves made during the executions of function ${\tt LocalExpansion}$) belongs to $\mathcal{O}(m)$. Actually, a first step has been made via Claim~\ref{claim:GE4} that implies that the number of moves of the second type made during the executions of {\tt CDFS} is at most $2(m+1)$. Hence, we just have to prove that the number of moves of the second type made outside of the executions of {\tt CDFS} and outside of the executions of line~\ref{LE:backtrack} of Algorithm~\ref{alg:LE} is $\mathcal{O}(m)$. These remaining moves are of two kinds. The first kind concerns those that are made during the execution of line~\ref{EX:move} of Algorithm~\ref{alg:EX} in order to make the DFS traversal of a tree of $\mathcal{T}$: they will be called the \emph{blue moves}. The second kind concerns those that are made during the execution of line~\ref{LE:move} of Algorithm~\ref{alg:LE}: they will be called the \emph{red moves}. To conduct the discussions, we need two more claims.

\begin{claim}
\label{claim:remove}
Suppose that at some time $t$ during $\mathcal{E}$, a tree is removed from $\mathcal{T}$ via the execution of line~\ref{clean1} of Algorithm~\ref{alg:LE}. For every edge $e$ of the removed tree, $e$ will not be in $\mathcal{T}$ from time $t$ to the end of $\mathcal{E}$.
\end{claim}

\begin{proofclaim}
The edges of $\mathcal{T}$ always originally come from the trees returned by the calls to function {\tt CDFS}. According to Lemma~\ref{lem:CDFX2}, for any of these calls $F$, the returned tree $Tr$ has no common edge with $M_1(F)$ and $M_1(F)\sqcup Tr\subseteq M_2(F)$. This implies that at any point of execution $\mathcal{E}$, every edge $e$ of $\mathcal{G}$ belongs to at most one tree of $\mathcal{T}$ (the operation of pruning and merging of lines~\ref{LE:prune} and~\ref{clean2} of Algorithm~\ref{alg:LE} can never change this fact). This also implies that for every edge $e$ of $\mathcal{G}$, there is at most one call to function {\tt CDFS} in $\mathcal{E}$ that returns a tree containing edge $e$. Hence, when a tree is removed from $\mathcal{T}$, none of its edges will ever appear again in $\mathcal{T}$ from the time of the removal to the end of $\mathcal{E}$. This proves the claim.
\end{proofclaim}


\begin{claim}
\label{claim:size}
At the beginning of each execution of function {\tt Explore}$(l,bound)$ in $LE_i$, we have the following property $P(i)$: for each tree $Tr$ of $\mathcal{T}$, $|Tr| \geq \lfloor\frac{l}{8}\rfloor$ if $Tr$ contains a node different from $L[i]$.
\end{claim}

\begin{proofclaim}
Suppose by contradiction that the claim does not hold and suppose that $i$ is the smallest integer for which the claim is not verified. Let $I$ be the first iteration in $LE_i$ such that $P(i)$ is not satisfied at the beginning of the execution of function {\tt Explore}$(l,bound)$ in $I$. Let $u$ (resp. $T_u$) be the node (resp. the tree of $\mathcal{T}$) in which the agent is located at the end of the execution of line~\ref{LE:move} of Algorithm~\ref{alg:LE}.

If no iteration of $W1$ has been made before $I$ in $\mathcal{E}$, then at the beginning of $I$, $\mathcal{T}$ is either empty or contains only node $L[i]$ in view of lines~\ref{LE:if} to~\ref{LE:condition} of Algorithm~\ref{alg:LE}. This is still true after the pruning operation in $I$ and thus at the beginning of the execution of function {\tt Explore}$(l,bound)$ in $I$. This is a contradiction. Hence, at least one iteration of $W1$ has been made before $I$ in $\mathcal{E}$. Denote by $I'$ the iteration preceding $I$. If $I'$ occurs during $LE_i$, we know that property $P(i)$ is satisfied at the beginning of the execution of function {\tt Explore}$(l,bound)$ in $I'$. At the end of this execution, variable $bound$ is necessarily non-negative or otherwise we get a contradiction with the existence of $I$. By Lemma~\ref{lem:CDFX2} and Algorithm~\ref{alg:EX}, this implies that for each tree $Tr$ added into $\mathcal{T}$ during this execution of function {\tt Explore}, $|Tr|<\lfloor\frac{l}{2}\rfloor$ only if all the nodes of $Tr$ are complete in $\mathcal{M}$ at the end of the execution of function {\tt Explore}. Hence, in view of lines~\ref{clean1} to~\ref{clean2} of Algorithm~\ref{alg:LE}, property $P(i)$ is still true at the beginning of $I$. This implies that property $P(i)$ is also true at the beginning of the execution of function {\tt Explore}$(l,bound)$ in $I$ in view of Algorithm~\ref{alg:PR}. Indeed, via the pruning operation {\tt Prune}$(l)$ occuring in $I$, $T_u$ remains unchanged if $\epsilon_{T_u}(u)< \lfloor \frac{l}{4}\rfloor$, and it cannot be split into trees of size less than $\lfloor \frac{l}{8}\rfloor$ otherwise. This is again a contradiction. Consequently, $I'$ is the last iteration of $W1$ in $LE_k$ for some $1\leq k<i$ and $I$ is the first iteration of $W1$ in $LE_i$. By assumption, property $P(k)$ is satisfied at the beginning of the execution of function {\tt Explore}$(l,bound)$ in $I'$ and, using the same arguments as above, we know that property $P(k)$ is still true at the end of $I'$. Note that at the end of $I'$, node $L[k]$ is necessarily complete. Indeed, otherwise we get a contradiction with the fact that $I'$ is the last iteration of $W1$ in $LE_k$ in view of Claim~\ref{claim:GE2} and line~\ref{LE:condition} of Algorithm~\ref{alg:LE}. We can state that node $L[k]$ belongs to a tree containing a node different from $L[k]$ at the the end of $I'$, because otherwise it could not be in $\mathcal{T}$ at this time in view of line~\ref{clean1} of Algorithm~\ref{alg:LE}. As a result, the size of every tree of $\mathcal{T}$ is at least $\lfloor \frac{l}{8}\rfloor$ at the end of $I'$. Moreover, from the end of $I'$ to the beginning of $I$, $\mathcal{T}$ is subject to no change except the possible insertion of the one-node tree $L[i]$, in view of lines~\ref{LE:if} to~\ref{LE:condition} of Algorithm~\ref{alg:LE}. Consequently, property $P(i)$ is satisfied at the beginning of $I$. As explained above, this implies that property $P(i)$ is also true at the beginning of the execution of function {\tt Explore}$(l,bound)$ in $I$, which is again a contradiction and proves the claim.
\end{proofclaim}

Now, consider an iteration $I$ of $W1$ in $LE_i$ during which at least one blue or red move is made. At the beginning of function {\tt Explore}$(l,bound)$ in $I$, we know that the agent occupies an incomplete node $u$ belonging to a tree $Tr$ of $\mathcal{T}$. Moreover, the size of $Tr$ is at least $\lfloor \frac{l}{8}\rfloor$. Indeed, if it is not the case, $Tr$ is then the one-node tree $L[i]$ in view of Claim~\ref{claim:size}: thus the agent does not make at least one blue or red move during $I$, which is a contradiction with the definition of $I$. The number of blue moves (resp. red moves) in $I$ is at most $2|Tr|$ (resp. at most $l$), which gives a total cost of at most $10(|Tr|+1)$ for these two kinds of moves in $I$. The edges of $Tr$ originally come from trees returned during some previous calls to {\tt CDFS} and thus they all belong to $H$ in view of Lemma~\ref{lem:CDFX2} and the definition of $H$. First suppose that the execution of {\tt Explore}$(l,bound)$ returns a non-negative integer. In view of Lemma~\ref{lem:CDFX2} and Algorithm~\ref{alg:EX}, we know that all the nodes of $Tr$ (including node $u$) are complete in $\mathcal{M}$ at the end of it. If $|Tr|>0$, the number of blue or red moves in $I$ is at most $20|Tr|$ and they can be associated to $|Tr|$ edges of $H$ that will be removed from $\mathcal{T}$ via the execution of line~\ref{clean1} of Algorithm~\ref{alg:LE} in $I$: after that, these removed edges will never appear anymore in $\mathcal{T}$ during $\mathcal{E}$ by Claim~\ref{claim:remove}. If $|Tr|=0$, the number of blue or red moves in $I$ is at most $10$. These moves can be associated to node $u$ of $Tr$ that becomes complete in $\mathcal{M}$ during a call to {\tt CDFS} made in the execution of {\tt Explore}, and that remains so till the end of $\mathcal{E}$. Since $u$ is visited during a call to {\tt CDFS}, it necessarily belongs to $H$. Now, suppose that the execution of {\tt Explore}$(l,bound)$ returns a negative integer in $I$. This means that $I$ is the last execution of $W1$ in $\mathcal{E}$ and the number of blue and red moves during this last execution of $W1$ is at most $10(|H|+1)$ as $|Tr|\leq |H|$. From the above arguments, it follows that the number of blue or red moves in $\mathcal{E}$ is at most $30|H|+10z+10$ where $z$ is the number of nodes in $H$. Since $H$ is not necessarily a connected graph, it is not straightforward that $z$ can be upper bounded by a linear function in $|H|$. Hence, we need this last claim.

\begin{claim}
$z$ is at most $2|H|+1$.
\end{claim}

\begin{proofclaim}
Suppose by contradiction that $z\geq 2|H|+2$. This means that there exist two nodes $u$ and $u'$ of $H$ that have no incident edge in $H$. Recall that $H$ is the graph consisting of the nodes and edges that have been explored during the executions of {\tt CDFS} that are triggered throughout the calls to function {\tt Explore}$(l,bound)$. Hence, node $u$ (resp. $u'$) can belong to $H$ only if one of these executions of {\tt CDFS} has been launched from node $u$ (resp. $u'$). Denote by $X$ (resp. $X'$) this execution of {\tt CDFS} and suppose without loss of generality that $X$ occurs before $X'$. By Algorithm~\ref{alg:EX}, at the beginning of $X$ node $u$ is incomplete in $\mathcal{M}$. If at the end of $X$, node $u$ is still incomplete in $\mathcal{M}$, then, by Lemma~\ref{lem:CDFX2}, the value of variable $bound$ of Algorithm~\ref{alg:LE} is $-1$ at the end of $X$. Hence $X'$ cannot exist, which is a contradiction. As a result, node $u$ becomes complete in $\mathcal{M}$ during $X$. This means that at least one edge $e$ incident to $u$ is visited during $X$ and thus $e$ belongs to $H$. This contradicts the definition of node $u$ and proves the claim.
\end{proofclaim}

In view of the above claim, the number of blue or red moves in $\mathcal{E}$ is at most $50|H|+20$, which is $\mathcal{O}(m)$ by Claim~\ref{claim:GE4}. This closes the analysis of the case where $l\geq 2$ and $m\ne\perp$, and thus completes the proof of the lemma.
\end{proof}

Below is the lemma establishing the properties of procedure ${\tt Search}(x)$ that will be used to show the main theorem of this section. The proof of this lemma relies on Lemma~\ref{lem:GE2}.

\begin{lemma}
\label{lem:search1}
Consider an execution $\mathcal{E}$ of procedure ${\tt Search}(x)$ from the source node $s$, for any real constant $x>0$. Assume that $M_1(\mathcal{E})=B_f(\mathcal{G},s)$ for some integer $f\geq0$. 
\begin{itemize}
\item The execution terminates at node $s$ and during the execution the agent always knows a path in $\mathcal{G}$ of length at most $\max\{f+1,\lfloor (1+x)f \rfloor\}$ from node $s$ to its current node.
\item There exists an integer $f'>f$ such that $M_2(\mathcal{E})=B_{f'}(\mathcal{G},s)$ and at least one of the following properties holds:
\begin{enumerate}
\item The cost of $\mathcal{E}$ is $\mathcal{O}(e(f+1))$ and $xf<3$.
\item The cost of $\mathcal{E}$ is $\mathcal{O}(e(f))$ and $f'>(1+\frac{x}{3})f$.
\item The cost of $\mathcal{E}$ is $\mathcal{O}(e(f)\log (f+2))$ and $e(f'+1)\geq 2e(f)$.
\item The cost of $\mathcal{E}$ is $\mathcal{O}(e(f+1))$ and $e(f+1)\geq 2e(f)$.
\end{enumerate}
\end{itemize}
\end{lemma}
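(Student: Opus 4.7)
The plan is to walk through ${\tt Search}(x)$ step by step, invoking Lemma~\ref{lem:GE2} for each ${\tt GlobalExpansion}$ call. Since $M_1(\mathcal{E}) = B_f(\mathcal{G},s)$, at line~\ref{SEf} we have $v = s$, and at line~\ref{assign0} we get $floor = f$ and $ceil = \lfloor(1+x)f\rfloor$. Applying Lemma~\ref{lem:GE2} to the call in line~\ref{SEinit} shows that ${\tt GlobalExpansion}(1,\perp)$ terminates at $s$ with cost $\mathcal{O}(e(f+1))$, setting $\mathcal{M} = B_{f+1}(\mathcal{G},s)$ while the agent stays within distance $f+1$ of $s$. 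After line~\ref{assign1} we have $floor = f+1$ and $l = \lfloor(\lfloor xf\rfloor - 1)/2\rfloor$, so the loop is entered iff $\lfloor xf\rfloor \geq 3$ and $e(f+1) < 2e(f)$. If either guard fails, then $f' = f+1$, the cost is $\mathcal{O}(e(f+1))$, and Condition~1 (when $xf < 3$) or Condition~4 (when $e(f+1) \geq 2e(f)$) holds.

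Assuming the loop runs, I would establish by induction on the iteration number the invariant that at the start of every iteration $\mathcal{M} = B_{floor}(\mathcal{G},s)$ and $floor + 2l - 1 \leq ceil \leq \lfloor(1+x)f\rfloor$. For a single iteration, Lemma~\ref{lem:GE2} applied to ${\tt GlobalExpansion}(l,m)$ gives that on success $\mathcal{M}$ becomes a supergraph of $B_{floor+l}(\mathcal{G},s)$, and on failure $e(floor + 2l - 1) > e(floor) + m \geq 2e(f)$ with all newly added edges lying outside $B_{floor}(\mathcal{G},s)$. To recover the invariant after the resize in line~\ref{resizeball}, the key observation is that within any supergraph $\mathcal{M}' \supseteq B_k(\mathcal{G},s)$ that is itself a subgraph of $\mathcal{G}$, distances to $s$ inside the ball are preserved and the extra edges have both endpoints at distance $\geq k$ from $s$; hence $B_k(\mathcal{M}',s) = B_k(\mathcal{G},s)$. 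The loop guard $|\mathcal{M}| < 2m$ gives $e(floor) < 2e(f)$, so by Lemma~\ref{lem:GE2} each iteration costs $\mathcal{O}(e(floor)+m) = \mathcal{O}(e(f))$. Finally, one checks that $l$ satisfies $l_{\mathrm{new}} \leq \lceil l/2\rceil$ on success and $l_{\mathrm{new}} = \lfloor l/2\rfloor$ on failure, which forces geometric decay and an $\mathcal{O}(\log l_0) = \mathcal{O}(\log(f+2))$ iteration count. The path-length claim then follows because $floor + 2l - 1 \leq ceil \leq \lfloor(1+x)f\rfloor$ throughout the loop, while between GE calls the agent sits at $s$; and $f' > f$ is immediate since $floor \geq f+1$ throughout.

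The proof then concludes with the case analysis for Conditions~2 and~3. If the first loop iteration succeeds, the clause ``$i \neq 1$ \textbf{or} $success = \false$'' in line~\ref{SEcondwhile} forces immediate exit, yielding $f' = f + 1 + l$; a direct calculation using $\lfloor xf\rfloor \geq 3$ gives $f' > (1+x/3)f$, and the total cost $\mathcal{O}(e(f+1)) + \mathcal{O}(e(f))$ collapses to $\mathcal{O}(e(f))$ because $e(f+1) < 2e(f)$, establishing Condition~2. Otherwise the invariant is augmented with $e(ceil) \geq 2e(f)$ from the first failed iteration onward, maintained across later failures because any new $ceil$ satisfies $e(ceil) > e(floor) + m \geq 2e(f)$ by Lemma~\ref{lem:GE2}. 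When the loop terminates, either $|\mathcal{M}| \geq 2m$ (so $e(floor+1) \geq e(floor) \geq 2e(f)$) or $l < 1$ (so $ceil - floor \leq 1$, whence $e(floor+1) \geq e(ceil) \geq 2e(f)$); combined with the total cost $\mathcal{O}(e(f)\log(f+2))$ this gives Condition~3. The main technical hurdle I anticipate is verifying that the resize in line~\ref{resizeball} truly restores $\mathcal{M}$ to $B_{floor}(\mathcal{G},s)$ in both branches of every iteration, and that the mixed success/failure recurrence on $l$ genuinely produces only logarithmically many iterations rather than degenerating near the integer boundary.
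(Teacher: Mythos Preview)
Your overall strategy matches the paper's proof closely: you establish the same loop invariant $\mathcal{M}=B_{floor}(\mathcal{G},s)$ with $floor+2l-1\le ceil\le\lfloor(1+x)f\rfloor$ (the paper's Claim~\ref{searchcl0}), the auxiliary invariant $e(ceil)\ge 2e(f)$ from the first failed iteration onward (the paper's Claim~\ref{searchcl3}), and the same case split on whether the loop runs and whether its first iteration succeeds. Your resize argument for line~\ref{resizeball} and your verification of Conditions~1,~2 and~4 are correct and essentially identical to the paper's.

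The genuine gap is the iteration bound. Your assertion ``$l_{\mathrm{new}}\le\lceil l/2\rceil$ on success'' is \emph{false}. It would hold only when $ceil-floor\le 2l+1$ at the start of the iteration; this is true initially and after any success, but \emph{not} after a failure: a failure sets $ceil-floor=2l_{\mathrm{old}}-1$ while $l=\lfloor l_{\mathrm{old}}/2\rfloor$, so one can have $ceil-floor\approx 4l$. Concretely, take $f=100$, $x=1$: then $floor_0=101$, $ceil_0=200$, $l_0=49$. If iteration~1 fails, one gets $floor_1=101$, $ceil_1=198$, $l_1=24$; if iteration~2 then succeeds, $floor_2=125$, $ceil_2=198$, $l_2=\lfloor 73/2\rfloor=36>l_1$. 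So $l$ can \emph{increase} on a success, and certainly does not halve.

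The paper's Claim~\ref{searchcl4} proves the $\mathcal{O}(\log(f+2))$ bound by a considerably more delicate case analysis: it shows that for every iteration index $i$ one has either $i\ge |S|-6$ or $l_{i+c}\le\tfrac{7}{8}\,l_i$ for some $c\le 3$, by tracking the interaction of successes and failures across windows of several iterations (including a subcase split on the parity of $l$). This is precisely the ``mixed success/failure recurrence'' you flag as a hurdle at the end, but your stated halving bound is not merely unverified---it is incorrect, and the actual argument required is substantially more involved.
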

\begin{proof}
The execution of procedure ${\tt Search}(x)$ from node $s$ can be viewed as a sequence $S$ of consecutive executions $\mathcal{E}_0, \mathcal{E}_1, \mathcal{E}_2,\ldots$ in which $\mathcal{E}_i$ corresponds to the $i$th execution of the while loop of Algorithm~\ref{alg:SE} if $i\geq 1$, and to the instructions before the while loop of Algorithm~\ref{alg:SE} otherwise. The length of $S$, \ie the number of executions $\mathcal{E}_i$ in $S$, will be denoted by $|S|$ (we show below that $|S|$ is finite). We will often discuss the values of three specific variables of Algorithm~\ref{alg:SE} that are $l$, $floor$ and $ceil$: in the sequel, the values of $l$, $floor$ and $ceil$ at the end of $\mathcal{E}_k$ will be respectively denoted by $l_k$, $floor_k$ and $ceil_k$. Observe that the existence of $\mathcal{E}_{i+1}$ implies that $l_i\geq1$ according to the while loop of Algorithm~\ref{alg:SE}. Also observe that the value of variable $floor$ is always a non-negative integer that never decreases, and the second parameter of each call to function ${\tt GlobalExpansion}$ is always a positive integer (cf. condition of the while loop of Algorithm~\ref{alg:SE}) except for the first call in which it is $\perp$. These observations, which condition the validity of several subsequent arguments, must be kept in mind when reading the proof as they will not always be repeated in order to lighten the text.

We start by showing two claims.

\begin{claim}
\label{searchcl0}
For every $0\leq k\leq |S|-1$, $\mathcal{E}_k$ starts and ends at node $s$, $M_2(\mathcal{E}_k)=B_{floor_k}(\mathcal{G},s)$ and $floor_k+2l_k-1\leq ceil_k\leq \lfloor (1+x)f \rfloor$.
\end{claim}

\begin{proofclaim}
We prove the claim by induction on $k$ and we begin with the base case $k=0$. Execution $\mathcal{E}_0$ essentially consists of a call to ${\tt GlobalExpansion}(1,\perp)$ from node $s$. By assumption we have $M_1(\mathcal{E}_0)=B_f(\mathcal{G},s)$. Hence, according to Lemma~\ref{lem:GE2}, $\mathcal{E}_0$ terminates at node $s$ and $M_2(\mathcal{E}_0)=B_{floor_{0}}(\mathcal{G},s)$. Moreover, in view of lines~\ref{assign0} and~\ref{assign1} of Algorithm~\ref{alg:SE}, we have $floor_0=f+1$ and $floor_0+2l_0-1\leq ceil_0\leq \lfloor (1+x)f \rfloor$, which concludes the base case $k=0$.

Now, assume that there is an integer $0\leq k\leq|S|-2$ such that $\mathcal{E}_k$ terminates at node $s$, $M_2(\mathcal{E}_k)=B_{floor_k}(\mathcal{G},s)$ and $floor_k+2l_k-1\leq ceil_k\leq \lfloor (1+x)f \rfloor$. We prove below that these properties also hold for $k+1$. 

Execution $\mathcal{E}_{k+1}$ essentially consists of a call to ${\tt GlobalExpansion}(l_k,m)$. By the inductive hypothesis, $\mathcal{E}_{k+1}$ starts at node $s$ and $M_1(\mathcal{E}_{k+1})=B_{floor_k}(\mathcal{G},s)$. It then follows from Lemma~\ref{lem:GE2} that $\mathcal{E}_{k+1}$ terminates at node $s$, and $B_{floor_{k}+l_k}(\mathcal{G},s)\subseteq M_2(\mathcal{E}_{k+1})$ if ${\tt GlobalExpansion}(l_k,m)$ returns true, $B_{floor_{k}}(\mathcal{G},s)\subseteq M_2(\mathcal{E}_{k+1})$ otherwise. Note that according to lines~\ref{condassign2} to~\ref{fincond} of Algorithm~\ref{alg:SE}, $floor_{k+1}=floor_k+l_k$ if ${\tt GlobalExpansion}(l_k,m)$ returns true, $floor_{k+1}=floor_k$ otherwise. Thus, in view of line~\ref{resizeball} of Algorithm~\ref{alg:SE}, we have $M_2(\mathcal{E}_{k+1})=B_{floor_{k+1}}(\mathcal{G},s)$. Finally, it remains to prove that $floor_{k+1}+2l_{k+1}-1\leq ceil_{k+1}\leq \lfloor (1+x)f \rfloor$. If ${\tt GlobalExpansion}(l_k,m)$ returns true,  then $ceil_{k+1}=ceil_k$ and $l_{k+1}= \lfloor \frac{ceil_{k+1}-floor_{k+1}}{2} \rfloor$ (cf. line~\ref{assign2} of Algorithm~\ref{alg:SE}), which implies, in view of the inductive hypothesis, that $floor_{k+1}+2l_{k+1}-1\leq ceil_{k+1}\leq \lfloor (1+x)f \rfloor$. If ${\tt GlobalExpansion}(l_k,m)$ returns false, then $ceil_{k+1}=floor_k+2l_k-1$, $floor_{k+1}=floor_k$ (cf. line~\ref{assign3} of Algorithm~\ref{alg:SE}) and $l_{k+1}=\lfloor \frac{l_k}{2}\rfloor$ which also implies, in view of the inductive hypothesis, that $floor_{k+1}+2l_{k+1}-1\leq ceil_{k+1}\leq \lfloor (1+x)f \rfloor$. This concludes the inductive proof of the claim.
\end{proofclaim}

\begin{claim}
\label{searchcl4}
$|S|$ is in $\mathcal{O}(\log (f+2))$.
\end{claim}

\begin{proofclaim}
If $|S|=1$, the claim trivially holds. Hence, suppose that $|S|\geq2$ and fix any integer $1\leq i \leq |S|-1$. We show below that there is an integer $c\leq 3$ such that $l_{i+c}\leq \frac{7l_i}{8}$ or $i\geq|S|-6$. This is enough to prove the claim. Indeed, the above property implies that $|S|$ is $\mathcal{O}(\log (f+2))$ because $l_1$ is at most linear in $f$ by Claim~\ref{searchcl0} and because we exit the while loop when the value of variable $l$ becomes less than $1$. We consider two cases.

\begin{itemize}
\item {\it Case~1: the execution of ${\tt GlobalExpansion}(l_{i-1},m)$ returns true in $\mathcal{E}_i$ and $i\leq|S|-3$}. (Note that executions $\mathcal{E}_{i+1}$ and $\mathcal{E}_{i+2}$ exist as $i\leq|S|-3$). By line~\ref{assign2} of Algorithm~\ref{alg:SE},  we get $l_{i}= \lfloor\frac{ ceil_i-floor_i }{2}\rfloor$. Let $i\leq j\leq i+2$ be the largest integer such that the return value of ${\tt GlobalExpansion}$ is true from the execution $\mathcal{E}_i$ to $\mathcal{E}_{j}$ included. In view of line~\ref{assign2} of Algorithm~\ref{alg:SE} and the fact that $l_{i}= \lfloor\frac{ ceil_i-floor_i }{2}\rfloor$, if $j=i+2$ then $l_{i+2}$ cannot be more than $\frac{l_{i}}{2}$. Still from the same arguments, we know that variable $l$ never increases from the end of $\mathcal{E}_{i}$ to the end of $\mathcal{E}_{j}$, and thus $l_{j}\leq l_i$. Hence, if $j<i+2$, the return value of ${\tt GlobalExpansion}$ is false in $\mathcal{E}_{j+1}$, which implies, according to line~\ref{assign3} of Algorithm~\ref{alg:SE}, that $l_{j+1}\leq \frac{l_j}{2}\leq \frac{l_i}{2}$. Consequently, in the first case, $l_{i+1}$ or $l_{i+2}$ is at most $\frac{l_i}{2}$.


\item {\it Case~2: the execution of ${\tt GlobalExpansion}(l_{i-1},m)$ returns false in $\mathcal{E}_i$ and $i\leq|S|-4$}. (Note that executions $\mathcal{E}_{i+1}$ to $\mathcal{E}_{i+3}$ exist as $i\leq|S|-4$). By line~\ref{assign3} of Algorithm~\ref{alg:SE}, we get $l_{i}=\lfloor\frac{l_{i-1}}{2}\rfloor$ and $ceil_i-floor_i=2l_{i-1}-1$. If the execution of ${\tt GlobalExpansion}$ returns also false in $\mathcal{E}_{i+1}$, then we have $l_{i+1}\leq \frac{l_i}{2}$. So, suppose that the execution of ${\tt GlobalExpansion}$ returns true in $\mathcal{E}_{i+1}$. We then have $l_{i+1}=\lfloor\frac{ceil_{i+1}-floor_{i+1}}{2}\rfloor=\lfloor\frac{ceil_i-floor_i-l_i}{2}\rfloor=\lfloor\frac{2l_{i-1}-1-l_i}{2}\rfloor$. Furthermore, $i+1\leq|S|-3$. Thus, using the same reasoning as in Case~1 (but replacing $i$ by $i+1$), we have $l_{i+2}\leq \frac{l_{i+1}}{2}$ or $l_{i+3}\leq \frac{l_{i+1}}{2}$. We consider three subcases.

\begin{itemize}
\item{\it Subcase~2.1: $l_{i-1}$ is even.} We have $l_{i+1}=\lfloor\frac{2l_{i-1}-1-l_i}{2}\rfloor=\lfloor \frac{3l_i-1}{2}\rfloor\leq\lfloor \frac{3l_i}{2}\rfloor$. Thus, $l_{i+2}$ or $l_{i+3}$ is at most $\frac{3l_i}{4}$.

\item{\it Subcase~2.2: $l_{i-1}$ is odd and $l_i\geq 2$.} We have $l_{i+1}=\lfloor\frac{2l_{i-1}-1-l_i}{2}\rfloor=\lfloor \frac{3l_i+1}{2}\rfloor\leq\lfloor \frac{7l_i}{4}\rfloor$. Thus, $l_{i+2}$ or $l_{i+3}$ is at most $\frac{7l_i}{8}$.

\item{\it Subcase~2.3: $l_{i-1}$ is odd, $l_i=1$ and $i\leq |S|-7$.} We have $l_{i+1}=\lfloor\frac{2l_{i-1}-1-l_i}{2}\rfloor=\lfloor \frac{3l_i+1}{2}\rfloor=2$. Note that $i+2\leq |S|-5$ and $l_{i+1}$ is even. Hence, using the same reasoning as from Case~1 to Subcase~2.1 (but replacing $i$ by $i+2$), we know that $l_{i+3}$, $l_{i+4}$ or $l_{i+5}$ is at most $\frac{3l_{i+2}}{4}$. Also note that $l_{i-1}=3$ as 
$l_{i}=\lfloor\frac{l_{i-1}}{2}\rfloor$. Thus, $ceil_i=floor_i+5$ since $ceil_i-floor_i=2l_{i-1}-1$. This implies that $ceil_{i+1}=floor_{i+1}+4$ in view of line~\ref{assign2} of Algorithm~\ref{alg:SE}. Consequently, if the execution of ${\tt GlobalExpansion}$ returns true in $\mathcal{E}_{i+2}$, we have $ceil_{i+2}-floor_{i+2}=ceil_{i+1}-floor_{i+1}-l_{i+1}= 2$, and thus $l_{i+2}=1$. Besides, if the execution of ${\tt GlobalExpansion}$ returns false in $\mathcal{E}_{i+2}$, we immediately have $l_{i+2}=\lfloor\frac{l_{i+1}}{2} \rfloor=1$. As a result, $l_{i+3}$, $l_{i+4}$ or $l_{i+5}$ is less than $1$, which means that $i\geq |S|-6$ and contradicts the assumption that $i\leq |S|-7$.
\end{itemize}
\end{itemize}
To summarize, we have shown that $i\geq |S|-6$ or there is an integer $c\leq 3$ such that $l_{i+c}\leq \frac{7l_i}{8}$. This concludes the proof of the claim.
\end{proofclaim}

According to Claims~\ref{searchcl0} and~\ref{searchcl4}, $S$ terminates after $\mathcal{O}(\log (f+2))$ iterations of the while loop, and $M_2(\mathcal{E})=M_2(\mathcal{E}_{|S|-1})=B_{floor_{|S|-1}}(\mathcal{G},s)$ with $floor_{|S|-1}\geq floor_0>f$. By assumption, $\mathcal{E}_0$ starts at node $s$ and $M_1(\mathcal{E}_0)=B_{f}(\mathcal{G},s)$. It then follows from Lemma~\ref{lem:GE2} that the agent always knows during $\mathcal{E}_0$ a path in $\mathcal{G}$ of length at most $f+1$ from $s$ to its current node. By Claim~\ref{searchcl0}, for every $1\leq k\leq |S|-1$, $\mathcal{E}_k$ starts at node $s$ and $M_1(\mathcal{E}_k)=B_{floor_k}(\mathcal{G},s)$. Hence, from Lemma~\ref{lem:GE2} and Claim~\ref{searchcl0},  it follows that for every $1\leq k\leq |S|-1$, the agent always knows during $\mathcal{E}_k$ a path in $\mathcal{G}$ of length at most $floor_{k-1}+2l_{k-1}-1\leq ceil_{k-1}\leq \lfloor (1+x)f \rfloor$ from $s$ to its current node. This completes the proof of the first part of the lemma.

Now, we prove the second part. To do so, it is enough to show that one of the four properties of the second part is satisfied with $f'=floor_{|S|-1}$ as $M_2(\mathcal{E})=B_{floor_{|S|-1}}(\mathcal{G},s)$ and $floor_{|S|-1}>f$.

Since $\mathcal{E}_0$ starts at node $s$ and $M_1(\mathcal{E}_0)=B_{f}(\mathcal{G},s)$, it follows from Lemma~\ref{lem:GE2} that the cost of $\mathcal{E}_0$ is $\mathcal{O}(e(f+1))$. After $\mathcal{E}_0$, the condition of the while loop in Algorithm~\ref{alg:SE} evaluates to false (and thus $|S|=1$) if and only if $l_0<1$ or $|M_2(\mathcal{E}_0)|\geq 2e(f)$ \ie if and only if $xf<3$ or $e(floor_0)\geq 2e(f)$. As a result, if $|S|=1$, 
$f'=floor_{|S|-1}=floor_0=f+1$ and the first or the fourth property are satisfied.

So, assume that $|S|\geq2$. From the above explanations we have $xf\geq 3$.

In $\mathcal{E}_1$, the agent executes ${\tt GlobalExpansion}(l_0,m)$ where $m$ is $e(f)$ in view of line~\ref{SEf} of Algorithm~\ref{alg:SE}. We stated earlier that $\mathcal{E}_0$ is in $\mathcal{O}(e(f+1))$. Besides, $M_2(\mathcal{E}_0)=B_{f+1}(\mathcal{G},s)$ by Claim~\ref{searchcl0}. Consequently, the existence of $\mathcal{E}_1$ and the condition of the while loop of Algorithm~\ref{alg:SE} imply that $|M_2(\mathcal{E}_0)|=e(f+1)<2m=2e(f)$, and we get the following claim.

\begin{claim}
\label{searchcl2}
If $|S|\geq2$, the cost of $\mathcal{E}_0$ is $\mathcal{O}(e(f))$.
\end{claim}

By Lemma~\ref{lem:GE2} and Claim~\ref{searchcl0}, $\mathcal{E}_1$ terminates and its cost is $\mathcal{O}(m)$, which is $\mathcal{O}(e(f))$ as $m=e(f)$, $M_1(\mathcal{E})=B_{f+1}(\mathcal{G},s)$ and $e(f+1)<2m$. If ${\tt GlobalExpansion}(l_0,m)$ returns true in $\mathcal{E}_1$, we know that $|S|=2$ in view of the condition of the while loop of Algorithm~\ref{alg:SE}. Since the cost of $\mathcal{E}_0$ is $\mathcal{O}(e(f))$ (cf. Claim~\ref{searchcl2}), the total cost of $S$ (and thus of $\mathcal{E}$) is $\mathcal{O}(e(f))$. Moreover, $floor_1=floor_0+l_0= f+1+\lfloor \frac{\lfloor xf\rfloor-1}{2} \rfloor$ which is more than $(1+\frac{x}{3})f$, as $xf\geq 3$. Hence, at the end of $S$ the second property is satisfied with $f'=floor_{|S|-1}=floor_1$.

Now, consider the case where ${\tt GlobalExpansion}(l_0,m)$ returns false in $\mathcal{E}_1$. To deal with this case, we need to prove the following claim.

\begin{claim}
\label{searchcl3}
For every $1\leq k\leq |S|-1$, $e(ceil_k)> 2e(f)$.
\end{claim}

\begin{proofclaim}
We prove the claim by induction on $k$ and we start with the base case $k=1$. By Claim~\ref{searchcl0}, $M_1(\mathcal{E}_1)=B_{floor_0}(\mathcal{G},s)$ and $\mathcal{E}_1$ starts at node $s$. From this, Lemma~\ref{lem:GE2} and the fact that ${\tt GlobalExpansion}(l_0,m)$ returns false, we have $e(floor_0+2l_{0}-1)> e(floor_0)+m$. Since $m=e(f)$, $ceil_1=floor_0+2l_{0}-1$ (cf. line~\ref{assign3} of Algorithm~\ref{alg:SE}) and $e(floor_0)\geq e(f)$ as $floor_0=f+1$, we get $e(ceil_1)> 2e(f)$, which concludes the base case $k=1$.

Now, assume that there is an integer $1\leq k\leq|S|-2$ such that $e(ceil_k)> 2e(f)$. We prove that $e(ceil_{k+1})> 2e(f)$. 

By Claim~\ref{searchcl0}, $\mathcal{E}_{k+1}$ terminates. If ${\tt GlobalExpansion}(l_k,m)$ returns true then $ceil_{k+1}=ceil_k$, which implies that $e(ceil_{k+1})>2e(f)$ as $e(ceil_k)>2e(f)$ by the inductive hypothesis. Otherwise, ${\tt GlobalExpansion}(l_k,m)$ returns false: in view of Lemma~\ref{lem:GE2} and Claim~\ref{searchcl0}, we then have $e(floor_{k}+2l_{k}-1)> e(floor_k)+m$, which implies $e(ceil_{k+1})>2e(f)$ as $ceil_{k+1}=floor_{k}+2l_{k}-1$ (cf. line~\ref{assign3} of Algorithm~\ref{alg:SE}) and $floor_k\geq f+1$. This completes the inductive proof of the claim.
\end{proofclaim}

With the above claim, we are ready to conclude the case where ${\tt GlobalExpansion}(l_0,m)$ returns false in $\mathcal{E}_1$. For every $1\leq k\leq |S|-1$, the cost of $\mathcal{E}_k$ is equal to the cost of the execution of ${\tt GlobalExpansion}(l_{k-1},m)$, which is $\mathcal{O}(e(floor_{k-1})+m)$ by Lemma~\ref{lem:GE2} and Claim~\ref{searchcl0}. Besides, $m=e(f)$ and $e(floor_{k-1})<2m$ for every $1\leq k\leq |S|-1$ (by the condition of the while loop of Algorithm~\ref{alg:SE}). Thus, from Claims~\ref{searchcl4} and~\ref{searchcl2}, the cost of $S$ is $\mathcal{O}(e(f)\log (f+2))$. Since $\mathcal{E}_{|S|-1}$ corresponds to the last iteration of the while loop, we get $l_{|S|-1}<1$ or $e(floor_{|S|-1})\geq 2m$ ($M_2(\mathcal{E}_{|S|-1})=B_{floor_{|S|-1}}(\mathcal{G},s)$ by Claim~\ref{searchcl0}). Note that from lines~\ref{condassign2} to~\ref{assign3}, it follows that $ceil_{|S|-1}-floor_{|S|-1}=2l_{|S|-2}-1$ and $l_{|S|-1}=\lfloor \frac{l_{S-2}}{2}\rfloor$, or $ceil_{|S|-1}-floor_{|S|-1}\leq 2l_{S-1}+1$. Hence, if $l_{|S|-1}<1$, we  have $ceil_{|S|-1}\leq floor_{|S|-1}+1$, which implies that $e(floor_{|S|-1}+1)> 2e(f)$ by Claim~\ref{searchcl3}. Furthermore, if $l_{|S|-1}\geq1$ then $e(floor_{|S|-1})\geq 2m$ and we have $e(floor_{|S|-1}+1)\geq 2e(f)$ because $m=e(f)$. As a result, the third property of the lemma is satisfied with $f'=floor_{|S|-1}$. This completes the analysis of the case where ${\tt GlobalExpansion}(l_0,m)$ returns false in $\mathcal{E}_1$, and thus concludes the proof of this lemma.
 \end{proof}



If we put aside the initial assignments of lines~\ref{TH1} and~\ref{THadd} in Algorithm~\ref{alg:TH}, the execution of procedure ${\tt TreasureHunt}(x)$ from the source node $s$ in $\mathcal{G}$ can be viewed as a sequence of consecutive executions of procedure ${\tt Search}(x)$: the i$th$ execution of ${\tt Search}(x)$ in this sequence will be denoted by $S_i$.

The following lemma is a small technical observation concerning the execution of ${\tt TreasureHunt}(x)$ from the source node $s$. Since, at the beginning of this execution, variable $\mathcal{M}$ is equal to $B_0(\mathcal{G},s)$, the lemma can be easily proved by induction on $i$ using Lemma~\ref{lem:search1}.

\begin{lemma}
\label{claim:theo0}
Consider an execution of procedure ${\tt TreasureHunt}(x)$ from the source node $s$, for any real constant $x>0$. For every integer $i\geq1$, $S_i$ starts and ends at node $s$, and there are two integers $f_{i+1}>f_i\geq i-1$ such that $M_1(S_i)=B_{f_i}(\mathcal{G},s)$ and $M_2(S_i)=B_{f_{i+1}}(\mathcal{G},s)$.
\end{lemma}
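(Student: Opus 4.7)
The plan is to proceed by induction on $i$, using Lemma~\ref{lem:search1} as the engine at each step together with the initializations performed in lines~\ref{TH1}--\ref{THadd} of Algorithm~\ref{alg:TH}.

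For the base case $i=1$, I would first observe that before $S_1$ starts, the agent executes line~\ref{TH1} so it occupies the source node $s$ (hence $v=s$), and line~\ref{THadd} sets $\mathcal{M}:=(\{s\},\emptyset)=B_0(\mathcal{G},s)$. Therefore $S_1$ begins at $s$ with $M_1(S_1)=B_{f_1}(\mathcal{G},s)$ where $f_1=0\geq 1-1$. Applying Lemma~\ref{lem:search1} with $f=0$ gives that $S_1$ terminates at $s$, and that there is an integer $f_2>f_1$ with $M_2(S_1)=B_{f_2}(\mathcal{G},s)$, settling the base case.

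For the inductive step, assume the statement holds for some $i\geq 1$, so $S_i$ ends at $s$ with $\mathcal{M}=B_{f_{i+1}}(\mathcal{G},s)$ and $f_{i+1}>f_i\geq i-1$. Since $f_{i+1}$ and $f_i$ are integers with $f_{i+1}>f_i$, we have $f_{i+1}\geq f_i+1\geq i=(i+1)-1$. The only instruction of Algorithm~\ref{alg:TH} between the end of $S_i$ and the start of $S_{i+1}$ is the \textbf{repeat} construct, which does not move the agent or change $\mathcal{M}$. Hence $S_{i+1}$ also starts at $s$ and $M_1(S_{i+1})=B_{f_{i+1}}(\mathcal{G},s)$. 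Invoking Lemma~\ref{lem:search1} once more (with $f=f_{i+1}$) yields that $S_{i+1}$ terminates at $s$ and that there is an integer $f_{i+2}>f_{i+1}$ with $M_2(S_{i+1})=B_{f_{i+2}}(\mathcal{G},s)$, closing the induction.

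There is no real obstacle: the argument is a direct bookkeeping induction whose only subtlety is verifying the lower bound $f_i\geq i-1$, which follows because each application of Lemma~\ref{lem:search1} strictly increases the radius by at least one integer unit.
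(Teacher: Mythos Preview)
Your proof is correct and follows exactly the approach indicated in the paper: the paper merely states that the lemma ``can be easily proved by induction on $i$ using Lemma~\ref{lem:search1}'' after observing that $\mathcal{M}=B_0(\mathcal{G},s)$ initially, and your write-up simply spells out that induction, including the verification that $f_i\geq i-1$ follows from the strict integer increase $f_{i+1}>f_i$.
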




We are now ready to prove the main result of this section that is stated in the following theorem.

\begin{theorem}
\label{theo:final0}
Consider a graph $\mathcal{G}$ of unknown radius $r$ in which a treasure is located at an unknown distance at most $1<d\leq r$ from the starting node $s$ of an agent. For any real constant $x>0$,  procedure ${\tt TreasureHunt}(x)$ allows the agent to find the treasure at cost $\mathcal{O}(e(d)\log d)$.
\end{theorem}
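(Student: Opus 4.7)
The plan is to combine Lemmas~\ref{claim:theo0} and~\ref{lem:search1} in order to bound the total number of edge traversals performed by the successive calls $S_1,S_2,\ldots$ to ${\tt Search}(x)$ until the treasure is found. Lemma~\ref{claim:theo0} guarantees that each $S_i$ starts and ends at $s$ and transforms $M_1(S_i)=B_{f_i}(\mathcal{G},s)$ into $M_2(S_i)=B_{f_{i+1}}(\mathcal{G},s)$ with $f_{i+1}>f_i\geq i-1$. In particular, every node of $B_{f_{i+1}}(\mathcal{G},s)$ is visited by the end of $S_i$, so if the treasure has not yet been found then $f_{i+1}<d$. Hence there exists a smallest phase index $\lambda$ during which the treasure is discovered, and for every $i\leq\lambda$ we have $f_i<d$. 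Writing $c_i$ for the cost of $S_i$, the cost paid until discovery is at most $\sum_{i=1}^{\lambda} c_i$.

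I would then apply Lemma~\ref{lem:search1} to classify each index $i\leq\lambda$ into exactly one of four classes $P_1,P_2,P_3,P_4$, according to which of the four properties of that lemma it verifies (ties broken arbitrarily to produce a partition). The bound on $\sum c_i$ splits into four sums, one for each $P_k$.

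The classes $P_1$ and $P_2$ are the easy ones. For $P_1$, property~1 forces $xf_i<3$, so $f_i$ lies below a constant depending only on $x$; since $(f_i)$ strictly increases with $i$, $|P_1|=O(1)$ and this class contributes at most $|P_1|\cdot O(e(d))=O(e(d))$. For $P_2$, sorting the indices as $i_1<\cdots<i_p$ and using $f_{i_{k+1}}\geq f_{i_k+1}>(1+x/3)f_{i_k}$ (the first inequality following from the global monotonicity of $(f_i)$) gives a geometric growth of $f_{i_k}$, so $p=O(\log d)$, and each phase costs $O(e(f_i))\leq O(e(d))$, contributing $O(e(d)\log d)$ in total.

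The harder classes are $P_3$ and $P_4$, which require geometric-series arguments on $e(\cdot)$. For $P_4$, sorted as $j_1<\cdots<j_q$, property~4 at index $j_{k+1}$ combined with $e(f_{j_{k+1}})\geq e(f_{j_k+1})\geq e(f_{j_k}+1)$ (using $f_{j_{k+1}}\geq f_{j_k+1}\geq f_{j_k}+1$ and the monotonicity of $e$) yields $e(f_{j_{k+1}}+1)\geq 2 e(f_{j_k}+1)$; hence the costs $c_{j_k}=O(e(f_{j_k}+1))$ double at each step and are dominated by $c_{j_q}=O(e(d))$, so $\sum_k c_{j_k}=O(e(d))$. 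The step I expect to be the main obstacle is $P_3$: property~3 only delivers $e(f_{i+1}+1)\geq 2 e(f_i)$, and if two $P_3$ indices are adjacent ($k_{j+1}=k_j+1$) then $f_{k_{j+1}}=f_{k_j+1}$ and $e(f_{k_{j+1}})$ need not be at least $2 e(f_{k_j})$. The workaround is to notice that Lemma~\ref{claim:theo0} forces a strict (integer) increase of $(f_i)$ at every step, so for the sorted $P_3$ indices $k_1<\cdots<k_r$ one has $k_{j+2}\geq k_j+2$ and thus $f_{k_{j+2}}\geq f_{k_j}+2\geq f_{k_j+1}+1$, giving $e(f_{k_{j+2}})\geq e(f_{k_j+1}+1)\geq 2 e(f_{k_j})$. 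In other words, $e(f_{k_j})$ doubles every two $P_3$-phases, so the costs $c_{k_j}=O(e(f_{k_j})\log(f_{k_j}+2))$ form a geometric series (ratio $1/\sqrt{2}$, absorbing the slowly-varying logarithmic factor) whose sum is $O(e(f_{k_r})\log(f_{k_r}+2))=O(e(d)\log d)$ since $f_{k_r}\leq f_\lambda<d$. Adding the four contributions yields the announced bound $O(e(d)\log d)$.
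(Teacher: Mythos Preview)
Your proof is correct and follows essentially the same strategy as the paper: split the phases into the four types of Lemma~\ref{lem:search1}, bound $P_1$ by a constant number of phases, bound $P_2$ by a logarithmic number of phases, and handle $P_3$ and $P_4$ by geometric-series (telescoping) arguments on $e(\cdot)$. In particular, your ``doubling every two $P_3$-phases'' trick is exactly the paper's even/odd split.

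There is, however, one small slip in your $P_3$ chain. You write
\[
f_{k_{j+2}}\ \geq\ f_{k_j}+2\ \geq\ f_{k_j+1}+1,
\]
but the second inequality would require $f_{k_j+1}\leq f_{k_j}+1$, which need not hold (the sequence $(f_i)$ can jump by more than~$1$ at each step). The conclusion you want, namely $f_{k_{j+2}}\geq f_{k_j+1}+1$, is nonetheless true via the correct chain
\[
f_{k_{j+2}}\ \geq\ f_{k_j+2}\ \geq\ f_{k_j+1}+1,
\]
using $k_{j+2}\geq k_j+2$ together with the monotonicity of $(f_i)$ for the first step, and the strict integer increase $f_{k_j+2}>f_{k_j+1}$ for the second. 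With this correction your argument goes through and matches the paper's. (The paper also makes the log factor explicit by setting $h(t)=c\cdot e(t)\log(t+2)$ and observing $h(f_{k_{j+2}})\geq 2h(f_{k_j})$, which is a bit cleaner than your ``absorb the slowly-varying logarithm into a ratio $1/\sqrt{2}$'' phrasing, but the substance is the same.)
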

\begin{proof}
Let $\mathcal{S}=(S_1,S_2,S_3,\ldots)$ be the sequence of consecutive executions of ${\tt Search}(x)$ resulting from the execution of procedure ${\tt TreasureHunt}(x)$ from node $s$. The length of $\mathcal{S}$, \ie the number of executions of ${\tt Search}(x)$ in $\mathcal{S}$, will be denoted by $|\mathcal{S}|$ (we show below that $|\mathcal{S}|$ is upper bounded by $d$). According to the procedure, the sequence $\mathcal{S}$ is interrupted as soon as the treasure is found during some execution $S_i$, which means that $S_i$ is interrupted before its ``natural end''. Nonetheless, to simplify the subsequent discussions, we will suppose in this proof that, instead of stopping as soon as the treasure is found, the agent stops at the end of the first execution of ${\tt Search}(x)$ during which it has found the treasure. Hence, all the executions of ${\tt Search}(x)$ in $\mathcal{S}$ are complete.

Variable $\mathcal{M}$ can be modified only through the instructions of line~\ref{THadd} of Algorithm~\ref{alg:TH}, line~\ref{resizeball} of Algorithm~\ref{alg:SE} and lines~\ref{CDFSadd1} to~\ref{CDFSadd2} of Algorithm~\ref{alg:CDFS}. In view of these instructions, the value of $\mathcal{M}$ is always a subgraph of $\mathcal{G}$ whose nodes and edges have been explored by the agent. Thus, we know that for every integer $1\leq i\leq |\mathcal{S}|$, $B_{d}(\mathcal{G},s) \nsubseteq M_1(S_i)$. Indeed, if it was not the case for some $i$, we would get a contradiction with the existence of $S_i$, as this would mean that the treasure has been found before the start of $S_i$. Note that $|\mathcal{S}|\leq d$ because if $S_{d+1}$ existed, we would have $B_{d}(\mathcal{G},s)\subseteq M_1(S_{d+1})$ in view of Lemma~\ref{claim:theo0}.

For every $1\leq i\leq |\mathcal{S}|$, we denote by $f_i$ the integer such that $M_1(S_i)=B_{f_i}(G,s)$. The existence and the unicity of each of these integers are guaranteed by Lemma~\ref{claim:theo0} and the fact that $B_{d}(\mathcal{G},s) \nsubseteq M_1(S_i)$. Still by Lemma~\ref{claim:theo0}, we know that for every $1\leq i < |\mathcal{S}|$, $0\leq f_i<f_{i+1}$.

From all the above explanations, we get the following claim.

\begin{claim}
\label{claim:theo}
$|\mathcal{S}|\leq d$, the treasure is found by the end of $\mathcal{S}$, and for every $1\leq i < |\mathcal{S}|$, $0\leq f_i<f_{i+1}<d$.
\end{claim}

Since the agent ends up finding the treasure, it remains to discuss the incurred cost.

Note that in view of Lemma~\ref{claim:theo0}, there is an integer $f_{|\mathcal{S}|+1}$ such that $M_2(S_{|\mathcal{S}|})=B_{f_{|\mathcal{S}|+1}}(G,s)$. However, we may have several candidates for $f_{|\mathcal{S}|+1}$ because $M_2(S_{|\mathcal{S}|})$ may be $\mathcal{G}$. In the rest of this proof, $f_{|\mathcal{S}|+1}$ is choosen as the smallest integer that satisfies one of the four properties of the second part of Lemma~\ref{lem:search1} (with $f'=f_{|\mathcal{S}|+1}$, and $f=f_{|\mathcal{S}|}$ since $M_1(S_{|\mathcal{S}|})=B_{f_{|\mathcal{S}|}}(G,s)$). 


In view of Lemma~\ref{lem:search1}, we know that for each execution $S_i$ in $\mathcal{S}$, at least one of the four properties of the second part of Lemma~\ref{lem:search1} is satisfied with $f=f_i$ and $f'=f_{i+1}$: the execution is then said to be of type $1\leq j\leq 4$ if it satisfies the $j$th property. The cost of $\mathcal{S}$ is upper bounded by $C_1+C_2+C_3+C_4$ where $C_j$ is the total cost of the executions of type~$j$ in $\mathcal{S}$. So, to prove the theorem, it is enough to prove that $C_1$, $C_2$, $C_3$ and $C_4$ all belong to $\mathcal{O}(e(d)\log d)$. This is the purpose of the rest of this proof. 

First, consider the case of $C_1$. We know from Lemma~\ref{claim:theo0} that if $i\geq\lceil \frac{3}{x}\rceil+1$, then $f_i\geq\lceil \frac{3}{x}\rceil$, which means that $f_ix\geq 3$ and $S_i$ cannot be of type~$1$ according to Lemma~\ref{lem:search1}. Thus, the number of executions of type~$1$ in $\mathcal{S}$ is at most $\lceil \frac{3}{x}\rceil$. Moreover, if $S_i$ is of type $1$, its cost is $\mathcal{O}(e(f_i+1))$ and $f_i<d$ in view of Lemma~\ref{lem:search1} and Claim~\ref{claim:theo}. Hence, $C_1$ is in $\mathcal{O}(e(d))$, which is $\mathcal{O}(e(d)\log d)$.

Now, consider the case of $C_2$. If $S_i$ is of type $2$, it follows from Lemma~\ref{lem:search1} that the cost of $S_i$ is $\mathcal{O}(e(f_i))$ and $f_{i+1}>(1+\frac{x}{3})f_i$. From Claim~\ref{claim:theo}, we must have $f_i<d$. Moreover, $f_1=0$ and by Lemma~\ref{claim:theo0}  $f_i>f_{i-1}$ for every $2\leq i\leq |\mathcal{S}|$. Hence, the number of executions of type~$2$ in $\mathcal{S}$ is $\mathcal{O}(\log d)$ and the cost of each of them is $\mathcal{O}(e(d))$. This implies that $C_2$ is in $\mathcal{O}(e(d)\log d)$.

Let us turn attention to the case of $C_3$. From Lemma~\ref{lem:search1}, there is a constant $c>0$ such that for every $1\leq i\leq |\mathcal{S}|$, we have the following: if $S_i$ is of type $3$, then $e(f_{i+1}+1)\geq 2e(f_i)$ and the cost of $S_i$ is at most $c \cdot e(f_i)\log(2+f_i)=h(f_i)$. Let $h(f_i)=c \cdot e(f_i)\log(2+f_i)$. Let $i^*$ be the smallest integer, if any, such that $S_{i^*}$ is of type $3$ and $i^*=i+2k$ for some integer $k\geq 1$. From the above explanations and from Claim~\ref{claim:theo}, it follows that, if $S_i$ is of type $3$, then $f_{i^*}\geq f_{i+2}\geq f_{i+1}+1>f_i$ and $e(f_{i^*})\geq e(f_{i+1}+1)\geq 2e(f_i)$, which means that $h(f_{i^*})\geq 2h(f_{i})$. Let $k$ (resp. $k'$) be the largest even (resp. odd) integer, if any, such that $S_k$ (resp. $S_{k'}$) is of type $3$. By the telescopic effect, the sum of the costs of the executions $S_i$ of type~3 such that $i$ is even (resp. odd) is at most $2h(f_k)$ (resp. $2h(f_{k'})$). Since by Claim~\ref{claim:theo}, $f_i<d$ for every $1\leq i\leq |\mathcal{S}|$, we know that $2h(f_k)$ as well as $2h(f_{k'})$ is at most $2c \cdot e(d)\log(1+d)$. Hence, $C_3$ is in $\mathcal{O}(e(d)\log d)$.



Using, as above, arguments based on the telescopic effect, we can show that $C_4$ is in $\mathcal{O}(e(f_k+1))$ where $k$ is the largest integer, if any, such that $S_k$ is of type $4$. Since $f_k<d$, $C_4$ is in $\mathcal{O}(e(d))$, which is $\mathcal{O}(e(d)\log d)$. This concludes the proof of the theorem.
\end{proof}

\section{Treasure hunt with restrictions}
\label{sec:rest}
Theorem \ref{theo:final0} holds for the task of treasure hunt without any restrictions on the moves of the agent, for all locally finite graphs, both finite and infinite.  In this section we show how {to modify our treasure hunt algorithm} to make it work under the fuel-restricted and the rope-restricted models for finite graphs.

{Strictly speaking, the fuel-restricted model was defined in  \cite{ABRS} assuming that both the constant $\alpha>0$ and the radius $r$ were known to the agent.} On the other hand, the rope-restricted model was defined in \cite{DKK} for any known constant $\alpha>0$  and for unknown radius $r$.  We will show that, for each of these restrictive models and for any known constant $\alpha>0$, {we can design a treasure hunt algorithm with} the promised efficiency even when $r$ is unknown. To this end, we need to modify the restriction of the fuel-restricted model from \cite{ABRS}, avoiding to reveal $r$ to the agent by showing it the size of the tank. We fix a positive constant $\alpha$, known to the agent, and we proceed as follows. For the restricted tank case from \cite{ABRS}, we assume that at any  visit of $s$ the agent can put as much fuel in the tank as it wants, but we show that if the (unknown) radius of the graph is $r$ then the tank is never filled to more than $B=2(1+\alpha)r$. The formalization of the rope-restricted model corresponds to its definition in \cite{DKK}. Recall that the agent is attached at $s$ by an infinitely extendible rope that it unwinds by a length 1 with every forward edge traversal and rewinds by a length of 1 with every backward edge traversal. Whenever the agent completely backtracks to $s$, the unwinded segment of the rope is of length 0.  We show that if the (unknown) radius of the graph is $r$ then the initial segment of the rope unwinded by the agent executing our algorithm will never be longer than $L=(1+\alpha)r$.

{The following theorem states that procedure ${\tt TreasureHunt}$ can be transformed into a procedure allowing the agent to find the treasure in the aforementioned restrictive models, without changing the asymptotic complexity.

\begin{theorem}
\label{theo:final}
{Consider a graph $\mathcal{G}$ of unknown radius $r$ in which a treasure is located at an unknown distance at most $1<d\leq r$ from the starting node $s$ of the agent. For any positive constant $\alpha$, procedure ${\tt TreasureHunt}(\frac{\alpha}{2})$ can be transformed into a procedure allowing the agent to find the treasure at cost $\mathcal{O}(e(d)\log d)$ in the rope-restricted model (resp. fuel-restricted model) without ever using a segment of the rope longer than $(1+\alpha)r$ (resp. without filling the tank to more than $2(1+\alpha)r$ at any visit of $s$).}
\end{theorem}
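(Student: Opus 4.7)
The plan is to exploit the parameter $x$ in ${\tt TreasureHunt}(x)$ as a slack knob. With $x=\alpha/2$, Lemma~\ref{lem:search1} guarantees that during any execution of ${\tt Search}(\alpha/2)$ starting at $s$ with $M_1=B_f(\mathcal{G},s)$, the agent at every instant knows a path from $s$ in $\mathcal{G}$ of length at most $\max\{f+1,\lfloor(1+\alpha/2)f\rfloor\}$ leading to its current position. Before the treasure is found, one always has $f<d\le r$ at the start of every ${\tt Search}$ call: by Lemma~\ref{claim:theo0}, $M_1(S_i)=B_{f_i}(\mathcal{G},s)$, and if $f_i\ge d$ then $B_d(\mathcal{G},s)\subseteq\mathcal{M}$, so the treasure would already have been discovered. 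Consequently the agent's known $s$-path has length at most $\max\{r,\lfloor(1+\alpha/2)(r-1)\rfloor\}<(1+\alpha)r$ throughout the execution.

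For the rope-restricted model, I would reinterpret this known $s$-path as the physical rope and transform ${\tt TreasureHunt}(\alpha/2)$ into a rope-compatible procedure. The edge traversals made inside ${\tt CDFS}$ (Algorithm~\ref{alg:CDFS}) and the final reverse-order backtrack at line~\ref{LE:backtrack} of ${\tt LocalExpansion}$ are already stack-respecting, so only the ${\tt MoveTo}$ calls require rewriting. First, I replace the DFS tree $T$ computed at lines~\ref{DFS}--\ref{DFStree} of ${\tt GlobalExpansion}$ by a BFS tree of $\mathcal{M}$ rooted at $s$, and implement the two ${\tt MoveTo}$ calls at lines~\ref{GElMT} and~\ref{GEback} collectively as a single DFS of this BFS tree, interleaved with the work done at each boundary node $L[i]$; the boundary is still visited in the DFS order of a spanning tree of $\mathcal{M}$ rooted at $s$, so the correctness and cost arguments of Lemma~\ref{lem:GE2} apply without change, and the rope length in this phase is bounded by the BFS depth, i.e. by $f$. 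Second, for the ${\tt MoveTo}$ at line~\ref{LE:move} of ${\tt LocalExpansion}$, the agent simply traverses the simple shortest path of length at most $l$ from $L[i]$ to $u$ in $\mathcal{M}$, using a rewind whenever this path re-uses the edge currently on top of the stack; since $l\le\lfloor(\lfloor(1+\alpha/2)f\rfloor-f-1)/2\rfloor$ in any ${\tt Search}(\alpha/2)$, a routine calculation gives $2l\le(\alpha/2)f-1$, so the rope never exceeds $f+2l\le(1+\alpha/2)f-1<(1+\alpha)r$ during the ensuing DFS of the pruned tree $T'_u$ and the invoked ${\tt CDFS}(\lfloor l/2\rfloor,\cdot)$. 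Each original edge traversal is matched by exactly one traversal of the transformed procedure, so Theorem~\ref{theo:final0} yields the $\mathcal{O}(e(d)\log d)$ cost bound.

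For the fuel-restricted model, the tank size is $B=2(1+\alpha)r$, while the agent's known distance from $s$ never exceeds $(1+\alpha/2)r$ by the invariant above. I insert a refill routine: before every edge traversal, if the remaining fuel is less than the length of the currently known $s$-path plus one, the agent travels back to $s$ along this $s$-path (using at most $(1+\alpha/2)r$ units of fuel), fills the tank up to $B$ (which, by construction, respects the $2(1+\alpha)r$ bound), and retraces the same path to resume. Each full refill cycle consumes at most $B$ fuel, of which at most $2(1+\alpha/2)r$ is spent on the two trips between the work site and $s$, leaving a slack of at least $B-2(1+\alpha/2)r=\alpha r$ units of fuel for productive traversals; thus the total number of refills is $\mathcal{O}(C/(\alpha r))+1$, where $C$ is the productive cost, and the total refill overhead $\mathcal{O}(r)\cdot(\mathcal{O}(C/(\alpha r))+1)=\mathcal{O}(C/\alpha+r)$ is swallowed by $C\in\mathcal{O}(e(d)\log d)$ for any constant $\alpha$.

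The main obstacle is certifying that the rope-compatible re-implementation does not break the invariants on which Lemma~\ref{lem:GE2} depends: specifically, one must argue that one iteration of the outer while loop of ${\tt LocalExpansion}$, with the ${\tt MoveTo}$ at line~\ref{LE:move} replaced by the stack-respecting push-then-possibly-rewind sequence above, is still fully undone by the reverse-order backtrack at line~\ref{LE:backtrack}. This holds because each sub-operation (the ${\tt MoveTo}$ along a simple path, the DFS of $T'_u$, and ${\tt CDFS}$) is itself a balanced push--pop sequence, and concatenations and global reversals of balanced sequences remain balanced; hence the correctness and cost analyses of Sections~\ref{sec:alg} and~\ref{sec:proof} carry over verbatim, delivering the claimed $\mathcal{O}(e(d)\log d)$ bound under both restrictions.
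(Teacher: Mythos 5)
Your rope-model argument takes a genuinely different route from the paper's. The paper never re-implements the internal moves of ${\tt TreasureHunt}(\frac{\alpha}{2})$; it treats each execution $S_i$ of ${\tt Search}(\frac{\alpha}{2})$ as a black-box sequence of edge traversals and splices in ``interruptions'': after every block of $\lfloor\frac{\alpha f_i}{2}\rfloor$ traversals (or before every single traversal when $\alpha f_i<2$) the agent undoes its recent history in reverse order --- which is automatically stack-respecting --- empties the rope at $s$, and walks back out along a known short path. This yields a \emph{frequent return property} bounding the stack depth by $\max\{f_i+1,\lfloor(1+\alpha)f_i\rfloor\}$ and serves both restricted models at once. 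Your alternative --- replacing the DFS tree of line~\ref{DFS} by a BFS tree and turning every ${\tt MoveTo}$ into a stack-respecting walk --- is plausible and would even give the slightly better bound $\lfloor(1+\frac{\alpha}{2})f_i\rfloor$; but the claim that the analyses of Sections~\ref{sec:alg} and~\ref{sec:proof} ``carry over verbatim'' hides real work (the order of $L$ changes, so the transit-cost bound of Claim~\ref{claim:GE1} must be re-established for the BFS tree), and your inequality $f+2l\le(1+\frac{\alpha}{2})f-1$ measures the excursion from the wrong base point: the walk to $u$ at line~\ref{LE:move} starts at depth $floor\ge f+1$, not $f$, and the invariant you actually need is $floor+2l-1\le ceil\le\lfloor(1+\frac{\alpha}{2})f\rfloor$. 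These points are repairable.

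The fuel-model argument has a genuine gap. You have the agent ``fill the tank up to $B=2(1+\alpha)r$'', but $r$ is unknown: the paper explicitly reformulates the fuel-restricted model so that the tank size does not reveal $r$, and the agent must itself choose how much fuel to take at each visit to $s$, using only quantities it can compute. The natural computable substitute, $2\max\{f_i+1,\lfloor(1+\alpha)f_i\rfloor\}$, destroys your amortization exactly when $\alpha f_i<2$: there $\lfloor(1+\alpha)f_i\rfloor\le f_i+1$, so the tank holds only $2(f_i+1)$, a round trip to a work site at distance $f_i+1$ already consumes all of it, and the per-cycle slack is $0$ rather than $\alpha r$ --- the agent cannot afford a single productive traversal. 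The paper needs a separate construction for this regime (interrupting before \emph{every} traversal and proving, via an auxiliary graph with a subdivided edge, that any traversal that would push the agent to distance $f_i+2$ concerns an edge it has already seen and can therefore be skipped). Your refill test is also off by one: with threshold ``remaining fuel less than known path length plus one,'' after one more traversal the fuel can equal the new path length minus one, stranding the agent one edge short of $s$. In short, the rope half of your proposal can likely be salvaged; the fuel half needs the paper's (or an equivalent) mechanism both for choosing the fill amount without knowing $r$ and for making progress when $f_i$ is small.
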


\begin{proof}
The execution of procedure ${\tt TreasureHunt}(\frac{\alpha}{2})$ from node $s$ corresponds to a sequence $S=(S_1,S_2,\ldots, S_{|S|})$ of executions of ${\tt Search}(\frac{\alpha}{2})$, in which the $|S|$th execution of ${\tt Search}(\frac{\alpha}{2})$ is interrupted prematurely because of the discovery of the treasure. 

We denote by $G_0$ the graph consisting only of node $s$, and for every $1\leq i\leq |S|$, we denote by $G_i$ the subgraph of $\mathcal{G}$ that has been explored from the beginning of $S_1$ to the end of $S_i$. For every $1\leq i\leq |S|$, the cost of $S_i$ will be denoted by $c_i$.


According to Lemma~\ref{claim:theo0}, for every $1\leq i\leq |S|$, $S_i$ starts and ends at node $s$ (except $S_{|S|}$ that ends at the node containing the treasure), there is an integer $f_i\geq0$ such that $M_1(S_i)=B_{f_i}(\mathcal{G},s)$ and if $i<|S|$, $M_2(S_i)=M_1(S_{i+1})$. Moreover, the value of $\mathcal{M}$ is always a subgraph of $\mathcal{G}$ whose nodes and edges have been all explored by the agent, and thus, for every $1\leq i \leq |S|$, $B_{f_i}(\mathcal{G},s)$ is a subgraph of $G_{i-1}$, $f_i$ is unique and $f_i<d$ (or otherwise the treasure would have been found before the start of ${S}_i$ which leads to a contradiction with the existence of this execution). Hence, from the fact that $d\leq r$, we get the following claim.

\begin{claim}
\label{claim:path}
For every $1\leq i\leq |S|$, $\max\{f_i+1,\lfloor (1+\alpha)f_i\rfloor\}\leq (1+\alpha)r$
\end{claim}


First, we describe a new algorithm $\mathcal{A}$ that permits to find the treasure, in the model without constraints, with asymptotically the same cost as that of  ${\tt TreasureHunt}(\frac{\alpha}{2})$. This new algorithm consists in executing ${\tt TreasureHunt}(\frac{\alpha}{2})$ with some changes in order to guarantee an extra property that will be important for our purpose. More precisely, an execution of $\mathcal{A}$ from node $s$ is a sequence of executions $(S'_1,S'_2,\ldots ,S'_{|S|})$ in which each $S'_i$ has cost $\mathcal{O}(c_i)$ and corresponds to an emulation of execution $S_i$. In particular, for every $1\leq i\leq |S|$, $S'_i$ starts and ends at node $s$ (except $S'_{|S|}$ that ends at the node containing the treasure), $M_1(S'_i)=M_1(S_i)=B_{f_i}(\mathcal{G},s)$, and at the end of $S'_i$, $G_{i}$ has been entirely explored. Obviously, all of this would not be interesting without the additional crucial property brought by $S'_i$ that will be called the \emph{frequent return property} and that is the following. Let $Sk$ be the stack initially empty in which we push (resp. pop) the last traversed edge if it corresponds to a forward (resp. backward) edge traversal. During $S'_i$, the size of $Sk$ is $0$ at least once during any block of $2\max\{f_i+1,\lfloor (1+\alpha)f_i\rfloor\}$ consecutive edge traversals, and is never greater than $\max\{f_i+1,\lfloor (1+\alpha)f_i\rfloor\}$. Moreover, at the beginning of $S'_i$, the size of $Sk$ is $0$, and if $i<|S|$, it is also $0$ at the end of $S'_i$.




Note that Algorithm $\mathcal{A}$ is a solution with the desired cost in the rope-restricted model, that will never use a segment of the rope longer than $(1+\alpha)r$, as for all $1\leq i \leq |S|$, we have $\max\{f_i+1,\lfloor (1+\alpha)f_i\rfloor\}\leq (1+\alpha)r$ according to Claim~\ref{claim:path}. By requiring the agent, each time the size of $Sk$ is $0$ in $S'_i$, to refuel its tank up to the limit of $2\cdot\max\{f_i+1,\lfloor (1+\alpha)f_i\rfloor\}$ (when the size of $Sk$ is $0$, the agent is at node $s$), we also get our objective with algorithm $\mathcal{A}$ in the fuel-restricted model, as the agent never runs out of fuel and $2\cdot\max\{f_i+1,\lfloor (1+\alpha)f_i\rfloor\}\leq 2(1+\alpha)r$.

Let us describe how we can construct our emulations while ensuring the features mentioned above. Consider the emulation $S'_i$ of $S_i$. Assume that at the beginning of $S'_i$, $G_{i-1}$ has been entirely explored, the size of $Sk$ is $0$ and $M_1(S'_i)=M_1(S_i)=B_{f_i}(\mathcal{G},s)$. These assumptions are trivially satisfied if $i=1$. We will show below that, at the end of $S'_i$, $G_{i}$ is entirely explored and if $i<|S|$ the size of $Sk$ is $0$. We will also show that if $i<|S|$ then $M_1(S'_{i+1})=B_{f_{i+1}}(\mathcal{G},s)$. We consider two cases.

The first case is when $\alpha f_i\geq 2$. We assume for simplicity that the number of edge traversals in $S_i$ is a positive multiple of $\lfloor\frac{ \alpha f_i}{2} \rfloor$.
As we will explain in detail, in this case the agent executes $S_i$ but interrupts it after each block of $\lfloor\frac{ \alpha f_i}{2} \rfloor$ edge traversals, except the last one, to make a ``return trip'' to node $s$ before resuming $S_i$ from where it was interrupted. The goal of these return trips is to satisfy the frequent return property. Once the agent has executed all instructions of $S_i$, it is either at the node containing the treasure or at node $s$. In the first case, we know that $i=|S|$ and $S'_i$ is simply over. In the second case $i<|S|$, but we do not have the guarantee that the size of $Sk$ is $0$. Hence, if the agent occupies node $s$ once it has executed all instructions of $S_i$, it then finishes $S'_i$ with what we call a \emph{close period} in which it executes in the reverse order some of the last edge traversals so that the size of $Sk$ becomes $0$ at the end of $S'_i$.

Denote by $v_k$ the node in which the $k$th interruption occurs, and by $P_k$ the path of length at most $\lfloor (1+\frac{\alpha}{2})f_i\rfloor$ from node $s$ to $v_k$ that is known by the agent when the interruption occurs. Note that $P_k$ necessarily exists in view of Lemma~\ref{lem:search1}, of the initial assumptions concerning $S'_i$ and of the fact that no edge traversal of $S_i$ has been skipped before the $k$th interruption. Also note that if there are several paths that can play the role of $P_k$, we simply choose the lexicographically smallest shortest path among them.

Each interruption is composed of two parts. In the first interruption, the first part consists in backtracking to node $s$ by executing in the reverse order the last $\lfloor \frac{\alpha f_i}{2} \rfloor$ edge traversals. The second part consists in going back to node $v_1$ using path $P_1$ to resume $S_i$. For the $k$th interruption with $k>1$, the first part consists in backtracking to node $s$ by executing in the reverse order the last $|P_{k-1}|+\lfloor \frac{\alpha f_i}{2} \rfloor$ edge traversals, and the second part consists in going back to node $v_k$ using path $P_k$ to resume $S_i$. Finally, the close period simply consists in backtracking to node $s$ by executing in the reverse order the last $\lfloor \frac{\alpha f_i}{2} \rfloor$ edge traversals if $S_i$ is made of only one block of $\lfloor\frac{ \alpha f_i}{2} \rfloor$ edge traversals. Otherwise, it consists in backtracking to node $s$ by executing in the reverse order the last $|P_{k^*-1}|+\lfloor \frac{\alpha f_i}{2} \rfloor$ edge traversals where $k^*=\frac{c_i}{\lfloor \frac{\alpha f_i}{2} \rfloor}$ is the number of blocks of $\lfloor\frac{ \alpha f_i}{2} \rfloor$ edge traversals in $S_i$.

It follows by induction on the number of interruptions that the size of $Sk$ is $0$ at the end of the first part of each interruption. Using this, the fact that $Sk$ is empty at the beginning of $S'_i$ and the fact that for every $1< k\leq \frac{c_i}{\lfloor \frac{\alpha f_i}{2} \rfloor}$, $|P_{k-1}|+\lfloor \frac{\alpha f_i}{2} \rfloor\leq \lfloor(1+\alpha)f_i\rfloor$, it follows that the frequent return property is satisfied during $S'_i$.

Moreover, it follows from the above explanation that at the end of $S'_i$, $G_{i}$ is entirely explored and the agent is at the node containing the treasure, if $i=|S|$. If $i<|S|$, it also follows that the size of $Sk$ is $0$ at the beginning of the next emulation $S'_{i+1}$, and $M_1(S'_{i+1})=M_1(S_{i+1})=B_{f_{i+1}}(\mathcal{G},s)$ because $M_2(S'_i)=M_2(S_i)=M_1(S_{i+1})$. Finally, concerning the cost of $S'_i$ observe that the number of interruptions is  $\frac{c_i}{\lfloor\frac{\alpha f_i}{2}\rfloor}-1$ and during each interruption as well as during the close period the agent makes at most $2\lfloor (1+\alpha)f_i\rfloor$ edge traversals. The cost of $S'_i$ is then upper bounded by $c_i+\frac{c_i}{\lfloor\frac{\alpha f_i}{2}\rfloor}2\lfloor (1+\alpha)f_i\rfloor\leq (1+\frac{2(1+\alpha)f_i}{\lfloor\frac{\alpha f_i}{2}\rfloor})c_i$. If $2\leq\alpha f_i<4$, then $\frac{2}{\alpha}\leq f_i<\frac{4}{\alpha}$, which implies that the cost is at most $(1+\frac{8(1+\alpha)}{\alpha})c_i$. Otherwise, $\alpha f_i\geq4$ and the cost is then upper bounded by $(1+\frac{2(1+\alpha)f_i}{\frac{\alpha f_i}{2}-1})c_i\leq (1+\frac{2(1+\alpha)}{\frac{\alpha}{2}-\frac{1}{f_i}})c_i$ which is also at most $(1+\frac{8(1+\alpha)}{\alpha})c_i$, as $\frac{1}{f_i}\leq\frac{\alpha}{4}$. Hence, the cost of $S'_i$ is $\mathcal{O}(c_i)$ as $\alpha$ is a constant, which concludes the first case.


The second case is when $\alpha f_i< 2$. Here, we could not apply the same strategy as that of the first case because we have $\lfloor\frac{ \alpha f_i}{2} \rfloor=0$. Consequently, we adopt a slightly different strategy in which the agent executes $S_i$ but interrupts it before each of its edge traversals. As explained in detail below, the $k$th interruption either consists of a return trip to node $s$ before resuming $S_i$ and making the $k$th edge traversal of $S_i$, or it consists in going to the node the agent should occupy at the end of the $k$th edge traversal of $S_i$ but without taking the corresponding edge: the agent then resumes $S_i$ as if it had just performed the $k$th edge traversal of $S_i$ (essentially it just makes some computations before interrupting again $S_i$ for the next edge traversal, if any). We will show that the latter situation will occur only when the ``skipped edge'' has already been traversed before by the agent. Once $S_i$ has been entirely processed, $S'_i$ is simply over if the agent is located at the node containing the treasure. Otherwise, the agent is at node $s$ and $i<|S|$. In this case, it executes (similarly as in the previous case) a close period in order to guarantee that the size of $Sk$ is $0$ at the end of $S'_i$.

Let us first focus on the interruptions. We denote by $(u_1,u_2,u_3,\ldots, u_{c_i+1})$ the sequence (with repetitions), in the chronological order, of the nodes that are visited during $S_i$, and by $(e_1,e_2,e_3,\ldots, e_{c_i})$ the sequence (with repetitions), in the chronological order, of the edges that are traversed during $S_i$. Consider the $k$th interruption occuring at node $u_k$ just before the $k$th edge traversal of $S_i$ and assume that at the beginning of this interruption, the property $H(k)$, consisting of the following three conditions, is satisfied: 
\begin{itemize}
\item The agent has made $D_k\leq f_i+1$ edge traversals since the last time when $Sk$ was empty (this could be the current time).
\item The sequence of edges $(e_1,e_2,\ldots, e_{k-1})$ has been previously explored by the agent.
\item The size of $Sk$ has been $0$ at least once during any previous block of $2(f_i+1)$ consecutive edge traversals and has never been greater than $f_i+1$.
\end{itemize}



Note that at the beginning of the first interruption, property $H(1)$ immediately holds. We will show below that property $H(k+1)$ is satisfied at the beginning of the $(k+1)$th interruption, if any.

In the $k$th interruption, the agent first checks whether it knows a path of length at most $f_i$ from node $s$ to node $u_k$. If this is the case, the agent executes in the reverse order the last $D_k$ edge traversals, at the end of which it is at node $s$ and $Sk$ is empty. Then, the agent comes back to $u_k$ using the known path of length at most $f_i$ from node $s$ to node $u_k$ (as when $\alpha f_i\geq 2$, if there are several such paths, the agent chooses the lexicographically smallest shortest among them). Once this is done, the interruption is over: the agent resumes $S_i$ and makes the $k$th edge traversal to reach node $u_{k+1}$. We can easily show that at the end of this edge traversal, and thus at the beginning of the next interruption if any, property $H(k+1)$ is satisfied.

So, assume that at the beginning of the $k$th interruption, the agent does not know a path of length at most $f_i$ from node $s$ to node $u_k$. In view of the fact that $D_k\leq f_i+1$, the shortest path from node $s$ to node $u_k$ that is known by the agent has actually length exactly $f_i+1$. Before explaining what the agent does, let us give some properties that necessarily hold in this situation. We have the following claim.
\vspace{-0.1cm}
\begin{claim}
$e_k$ belongs to $G_{i-1}$ or to the sequence $(e_1,e_2,\ldots, e_{k-1})$.
\end{claim}
\vspace{-0.1cm}
\begin{proofclaim}
Suppose by contradiction that the claim does not hold. Let $\mathcal{G}^*$ be the graph $\mathcal{G}$ with a midpoint $z$ added on edge $e_k$ and denote by $S^*_i$ the $i$th call to {\tt Search}$(\frac{\alpha}{2})$ made during an execution of {\tt TreasureHunt}$(\frac{\alpha}{2})$ from node $s$ in $\mathcal{G}^*$. Note that, in view of the execution of {\tt TreasureHunt}$(\frac{\alpha}{2})$ in $\mathcal{G}$, the definition of $\mathcal{G}^*$ and the assumption that the claim does not hold, $S^*_i$ indeed exists and at the beginning of $S^*_i$ variable $\mathcal{M}$ is $B_{f_i}(\mathcal{G}^*,s)=B_{f_i}(\mathcal{G},s)$. We know that just before making its $k$th edge traversal during $S_i$, the agent is at node $u_k$ and the shortest path that is known by the agent from node $s$ to node $u_k$ in $\mathcal{G}$ has length exactly $f_i+1$. Hence, the definition of $\mathcal{G}^*$ and the fact that $e_k$ does not belong to $G_{i-1}$ or to the sequence $(e_1,e_2,\ldots, e_{k-1})$, implies that during $S^*_i$, there is a time $t$ when the agent is at $z$ and the shortest path that is known by the agent from node $s$ to $z$ in $\mathcal{G}^*$ has length exactly $f_i+2$. However, by Lemma~\ref{lem:search1}, the length of the shortest path that is known by the agent from node $s$ to $z$ in $\mathcal{G}^*$ at time $t$ must have length at most $\max\{f_i+1,\lfloor (1+\frac{\alpha}{2})f_i \rfloor\}$, which is at most $f_i+1$ because $\frac{\alpha f_i}{2}<1$. This is a contradiction, which proves the claim.
\end{proofclaim}

From the above claim, it follows that at the beginning of the $k$th interruption the agent has already traversed edge $e_k$ before, and already knows which edge of $G_{i-1}$ or of $(e_1,e_2,\ldots, e_{k-1})$ corresponds to it. Thus, at the beginning of the $k$th interruption, the agent can already determine a path of length at most $f_i+1$ from node $s$ to node $u_{k+1}$ because in view of Lemma~\ref{lem:search1} it must know such a path when reaching node $u_{k+1}$ and because the traversal of $e_k$ does not bring extra topological information on $\mathcal{G}$.

Now we are are able to formulate what the agent does, when it has noticed that it does not know a path of length at most $f_i$ from node $s$ to node $u_k$. It executes in the reverse order the last $D_k$ edge traversals, at the end of which it is at node $s$ and $Sk$ is empty. Then, instead of coming back to $u_{k}$, it goes directly to node $u_{k+1}$ using the known path (highlighted in the previous paragraph) of length at most $f_i+1$ from node $s$ to node $u_{k+1}$. Once this is done, the interruption is over: the agent resumes $S_i$ and acts as if it had just traversed edge $e_k$ (as previously mentioned, it just performs some computations before interrupting again $S_i$ for the next edge traversal, if any). It follows that at the end of the interruption, and thus at the beginning of the following one if any, $H(k+1)$ is satisfied. 
We have shown by induction on $k$ that, at the beginning of the $k$th interruption, for any $k\geq 1$, the property $H(k)$ is satisfied.  This closes the description of the interruptions.

It remains to deal with the close period. At the beginning of it, property $H(c_i+1)$ is satisfied, which implies that the agent has performed $D_{c_i+1}\leq f_i+1$ edge traversals since the last time when $Sk$ was empty. Hence, during the close period, the agent simply executes in the reverse order the last $D_{c_i+1}$ edge traversals, at the end of which $Sk$ is empty. In view of this, of the fact that $Sk$ is empty at the beginning of $S'_i$, and of property $H(c_i+1)$, the frequent return property is satisfied during $S'_i$.

It follows from the above explanation that at the end of $S'_i$, $G_{i}$ is entirely explored and the agent is at the node containing the treasure if $i=|S|$. If $i<|S|$, it also follows that the size of $Sk$ is $0$ at the beginning of the next emulation $S'_{i+1}$, and $M_1(S'_{i+1})=M_1(S_{i+1})=B_{f_{i+1}}(\mathcal{G},s)$ because $M_2(S'_i)=M_2(S_i)=M_1(S_{i+1})$. Finally, concerning the cost of $S'_i$, observe that the number of interruptions is  $c_i$ and during the close period as well as during each interruption the agent makes at most $2(f_i+1)$ edge traversals. The cost of $S'_i$ is then upper bounded by $2(f_i+1)c_i+2f_i+2$ which is at most $2(\frac{2}{\alpha}+1)c_i+\frac{4}{\alpha}+2$, as $f_i<\frac{2}{\alpha}$ in the currently analysed case. Hence, the cost of $S'_i$ is $\mathcal{O}(c_i)$. This concludes the second case and thus concludes the proof of the theorem.
\end{proof}

\vspace{-0.5cm}
\section{Conclusion}

We presented treasure hunt algorithms working at cost ${\cal O}(e(d)\log d)$ for the unrestricted, fuel-restricted and rope-restricted models. Hence our algorithms are nearly linear in $e(d)$ (thus refuting the conjecture from \cite{ABRS}) and at the same time almost optimal, as cost $\Theta(e(d))$ cannot be beaten in general. The natural open problem is whether it is possible to get rid of the factor ${\cal O}(\log d)$ that separates us from optimal complexity of treasure hunt.

\vspace{-0.2cm}


\begin{thebibliography}{12}

\bibitem{AG}
S.Alpern, S.Gal, The Theory of Search Games and Rendezvous, Kluwer Academic Publications, 2003.

\bibitem{AAD}
S. Angelopoulos, D. Arsenio, C. Durr, Infinite linear programming and online searching with turn cost, Theoretical Computer Science  670 (2017), 11-22.

\bibitem{ABRS}
B. Awerbuch, M. Betke, R.L. Rivest, M. Singh, Piecemeal graph exploration by a mobile robot, Information and Computation 152 (1999), 155-172.


\bibitem{BCR}
R. Baeza-Yates, J. Culberson, J. Rawlins, Searching the plane,
Information and Computation 106 (1993), 234-252.

\bibitem{Bec1964}
			A. Beck,
			On the linear search problem,
			Israel Journal of Mathematics, 2 (1964), 221-228.

			\bibitem{BN1970}
			A. Beck, D.J. Newman,
			Yet more on the linear search problem,
			Israel Journal of Mathematics, 8 (1970), 419-429.

			\bibitem{Bel1963}
			R. Bellman,
			An optimal search problem,
			SIAM Review 5 (1963), 274.

\bibitem{DCD}
P. Dasgupta, P.P. Chakrabarti, S.C. DeSarkar, Agent searching in a tree and the optimality of
iterative deepening, Artificial Intelligence 71 (1994), 195-208.

\bibitem{DCS1995}
			P. Dasgupta, P.P. Chakrabarti, S.C. DeSarkar,
			A Correction to ``Agent Searching in a Tree and the Optimality of Iterative Deepening",
			Artificial Intelligence 77 (1995), 173-176.

\bibitem{DFG2006}
			E.D. Demaine, S.P. Fekete, S. Gal,
			Online searching with turn cost,
			Theoretical Computer Science 361 (2006), 342-355.



\bibitem{DKK}
			C.A. Duncan, S.G. Kobourov, V.S.A. Kumar,
			Optimal constrained graph exploration,
			ACM Trans. Algorithms 2 (2006), 380-402.

			\bibitem{FKK+2008}
			R. Fleischer, T. Kamphans, R. Klein, E. Langetepe, G. Trippen,
			Competitive Online Approximation of the Optimal Search Ratio,
			SIAM J. Comput. 38 (2008), 881-898.





\bibitem{FHGTM}
G. M. Fricke, J. P. Hecker, A. D. Griego, L. T. Tran, M.E. Moses, A Distributed Deterministic Spiral Search Algorithm for Swarms, Proc. IEEE/RSJ International Conference on Intelligent Robots and Systems (IROS 2016), 4430-4436.

\bibitem{Gal2013}
			S. Gal,
			Search Games: A Review,
			Search Theory: A Game Theoretic Perspective (2013), 3-15.


\bibitem{GK2010}
			S.K. Ghosh, R. Klein,
			Online algorithms for searching and exploration in the plane,
			Computer Science Review 4 (2010), 189-201.







\bibitem{JL}
A. Jez, J. Lopuszanski, On the two-dimensional cow search problem, Information Processing Letters 109 (2009), 543 - 547.

\bibitem{KRT}
M.Y.Kao, J.H. Reiff, S.R. Tate, Searching in an unknown environment: an optimal randomized algorithm for the cow-path problem,
Information and Computation 131 (1997), 63-80.

\bibitem{KZ2011}
			D.G. Kirkpatrick, S. Zilles,
			Competitive search in symmetric trees,
			Proc. 12th International Symposium on Algorithms and Data Structures (WADS 2011), 560-570.


\bibitem{KKKS}
D. Komm, R. Kralovic, R. Kralovic, J. Smula,
Treasure hunt with advice, Proc. 22nd International Colloquium on Structural Information and Communication Complexity  (SIROCCO 2015), 328-341.


\bibitem{La}
E. Langetepe, On the Optimality of Spiral Search, Proc. 21st Ann. ACM-SIAM Symp. Disc. Algor. (SODA 2010), 1-12.

\bibitem{La2}
E. Langetepe, Searching for an axis-parallel shoreline, Theoretical Computer Science 447 (2012), 85-99.

\bibitem{LS2001}
			A. Lopez-Ortiz, S. Schuierer,
			The ultimate strategy to search on m rays?,
			Theoretical Computer Science 261 (2001), 267-295.


\bibitem{MP}
A. Miller, A. Pelc, Tradeoffs between cost and information for rendezvous and treasure hunt, Journal of Parallel and Distributed Computing 83 (2015), 159-167.



\bibitem{Pe}
A. Pelc, Reaching a target in the plane with no information, Information Processing Letters 140 (2018), 13-17.


\bibitem{Sch2001}
			S. Schuierer,
			Lower bounds in on-line geometric searching,
			Computational Geometry 18 (2001), 37-53.


\bibitem{TSZ}
A. Ta-Shma, U. Zwick,
Deterministic rendezvous, treasure hunts and strongly universal exploration sequences.
ACM Transactions on Algorithms 10 (2014), 12:1-12:15.





\end{thebibliography}
\end{document}